\let\footnote=\endnote
\newcommand{\LeftEqNo}{\let\veqno\@@leqno}
\renewcommand\footnoterule{%
	\kern-3\p@
	\hrule\@width \textwidth
	\kern2.6\p@}
\renewcommand*{\@fnsymbol}[1]{\ensuremath{\ifcase#1\or *\or \dagger\or \ddagger\or **\or \mathsection\or \mathparagraph\or \|\or  \dagger\dagger
		\or \ddagger\ddagger \else\@ctrerr\fi}}
\newtheorem{theorem}{Theorem}
\newtheorem{proposition}{Proposition}
\newtheorem{definition}{Definition}
\newtheorem{corollary}{Corollary}
\newtheorem{lemma}{Lemma}
\DeclareMathOperator*{\argmin}{argmin}
\algnewcommand{\LineComment}[1]{\State \(\triangleright\) #1}
\newcommand{\LTSP}{\textbf{LTSP}\xspace}
\newcommand{\LTSPR}{\textbf{LTSP-R}\xspace}
\newcommand{\OLTSP}{\textbf{O-LTSP}\xspace}
\newcommand{\SSS}{\texttt{SSS}\xspace}
\newcommand{\GS}{\texttt{GS}\xspace}
\newcommand{\FGS}{\texttt{FGS}\xspace}
\newcommand{\NFGS}{\texttt{NFGS}\xspace}
\newcommand{\LogNFGS}{\texttt{LogNFGS}\xspace}
\newcommand{\LTFS}{\texttt{LTFS}\xspace}
\newcommand{\ELTFS}{\texttt{LTFS+}\xspace}
\newcommand{\Replan}{\texttt{Replan}\xspace}
\newcommand{\ReplanSSS}{\texttt{Replan+SSS}\xspace}
\newcommand{\ReplanGS}{\texttt{Replan+GS}\xspace}
\newcommand{\ReplanFGS}{\texttt{Replan+FGS}\xspace}
\newcommand{\ReplanLogNFGS}{\texttt{Replan+LogNFGS}\xspace}
\newcommand{\timehorizon}{\mathcal{H}\xspace}
\newcommand{\file}{\ensuremath{f}}
\newcommand{\objectivevalue}{\ensuremath{v}}
\newcommand{\files}{\ensuremath{\mathcal{F}}}
\newcommand{\size}{\ensuremath{s}}
\newcommand{\nrequests}{\ensuremath{n}}
\newcommand{\lengthtape}{\ensuremath{m}}
\newcommand{\setrequests}{\ensuremath{\mathcal{R}}}
\newcommand{\request}{\ensuremath{r}}
\newcommand{\releasetime}{\ensuremath{t}}
\newcommand{\minibatches}{\ensuremath{\mathcal{B}}}
\newcommand{\conflictminibatch}{\ensuremath{\bar{\mathcal{B}}}}
\newcommand{\minibatch}{\ensuremath{b}}
\newcommand{\fileleft}{\ensuremath{l}}
\newcommand{\fileright}{\ensuremath{r}}
\newcommand{\tape}{\ensuremath{H}\xspace}
\begin{document}

\title{Theoretical and Practical Aspects of the Linear Tape Scheduling Problem}

\author[1]{Carlos Cardonha\thanks{carloscardonha@br.ibm.com}}
\author[1]{Lucas C. Villa Real\thanks{lucasvr@br.ibm.com}} 
\affil[1]{\small IBM Research}
\maketitle

\begin{abstract}	
		Magnetic tapes have been playing a key  role as means for storage of digital data for decades, and their unsurpassed cost-effectiveness still make them the technology of choice in several industries, such as media and entertainment. Tapes are mostly used for cold storage nowadays, and therefore the study  of scheduling algorithms for read requests tailored for these devices has been largely neglected  in the literature. In this article, we investigate the Linear Tape Scheduling Problem (\LTSP), in which read requests associated with files stored on a single-tracked magnetic tape should be scheduled in a way that the sum of all response times are minimized.  \LTSP has many similarities with classical combinatorial optimization problems such as the Traveling Repairmen Problem and the Dial-a-Ride Problem restricted to the real line; nevertheless,  significant  differences on structural properties and strict time-limit constraints of  real-world scenarios make \LTSP challenging and interesting on its own. In this work, we investigate several properties and algorithms for~\LTSP and some of its extensions. The results allowed for the identification of 3-approximation algorithms for~\LTSP and  efficient exact algorithms for some of its special cases. We also show that~\LTSPR, the version of the problem with heterogeneous release times for requests, is NP-complete. \OLTSP, the online extension of~\LTSPR, does not admit $c$-competitive algorithms for any constant factor~$c$, but we nevertheless introduce an algorithm for the problem and show through extensive  computational experiments on synthetic and real-world datasets that different embodiments of the proposed strategy are computationally efficient and over-perform by orders of magnitude an algorithm being currently used by real-world tape file systems.%
		\\ \\
		\smallskip
		\noindent \textbf{Keywords.} Scheduling; Approximation Algorithms; Online Algorithms; Competitive Analysis;   Magnetic Tapes
\end{abstract}

	\section{Introduction}

	A modern magnetic tape is a storage device in which files are distributed sequentially in a strip of plastic film. The reading/writing head is in a fixed position, so the tape drive needs to rewind or fast-forward the tape to the leftmost bit of the requested file in order to retrieve its content. 
	Because access in tapes is sequential, response times 
	depend on the tape's position and on the order in which operations are executed (\cite{Sandsta99}). 
	Write requests are often handled asynchronously by storage systems through the use of caching memories, though, so the Linear Tape Scheduling Problem (\LTSP) is focused on the scheduling of read requests, that is, on the identification of an execution order for these requests that minimizes the \emph{overall response times} (i.e.,  the sum of all response times).
	
	
	In scenarios where read requests occur frequently and involve files which are far apart in the tape, sequential access can seriously compromise performance; in these cases, hard disk drives and solid-state drives provide much faster response times and are therefore preferable. Nevertheless, certain aspects of magnetic tapes  make them the storage technology of choice in many industries. First, magnetic tapes still have the best price per bit ratios in the market, making them suitable  for ``cold storage'' (or long-term archiving), i.e., for storage of data whose access frequency is expected to be very low. Cold storage  is common practice in the media and entertainment industry, where high-resolution digital content is stored but may remain archived for many years without ever being read~(\cite{Frank12}).  In the oil industry, sonars aboard vessels often capture more data from deep oceans than the available computers can process, resulting in the archival of that data for eventual processing -- akin to situations faced by large scientific organizations~(\cite{Murray12}) and renowned projects such as the Large Hadron Collider~(\cite{Cavalli10}). Tapes are also optimized for streaming operations (sequential data transfer speeds may reach around 300MB/second in some scenarios), which makes them even more interesting for the industries described above.  In contrast with scenarios in which cold storage data is seldomly accessed, many organizations are opening access to their archives through the Internet, leading to unpredictable access patterns on sequential storage mediums. Moreover, batch processes are routinely found to be accessing data from said storage ~(\cite{Adams12}). For other tendencies in uses of magnetic tapes, see e.g., \cite{Parkash13}, \cite{VillaReal15}, and \cite{Pease10}.

	A few assumptions are made in the versions of~\LTSP investigated in this work. First, we do not consider overwrite, modify, and deletion operations. This assumption actually has a practical motivation: as data stored within a partition of a magnetic tape cannot be updated without causing data loss to the tracks nearby, such operations are undesirable and typically avoided whenever possible. 
	As mentioned previously, we also do not consider write requests, as they are executed according to a FIFO policy and their response times are actually irrelevant in practice.
	Another assumption is the restriction to single track tapes, 
	that is, the head remains at a fixed vertical position while the tape moves to the left and to the right; although strong, this assumption reflects situations where batches of requests are relatively ``local'' (i.e., files belonging to the same working set stay close to each other in the tape), a desired configuration that has motivated tape partitioning schemes proposed in the literature~(see~\cite{Oracle11}). Finally, we assume that the velocity of the tape is always constant; we remark, though, that tapes are mechanical devices and therefore acceleration and momentum do have an impact.


	Work on the literature dedicated to \LTSP and similar problems is scarce; in the operations research literature, we identified only one paper addressing a similar problem~(\cite{Day1965letter}). One explanation for this is the limited computational power and the strict time restrictions constraining scheduling algorithms for the problem in  practical settings. Namely, scheduling plans are expected to be produced in less than one second (two hundred milliseconds is the rule-of-thumb), which may hinder the application of traditional operations research techniques for discrete optimization problems in medium- and large-scale scenarios. For instance, the file system of a regular desktop  machine may have more than 500,000 files, thus leading to situations which are very unlikely to be properly addressed with integer linear programming and constraint programming formulations.

	In this article, we investigate several aspects of the Linear Tape Scheduling Problem and some of its variations and special cases, thus largely extending the first article directly addressing the problem~(\cite{cardonha2016online}). Several theoretical results are presented, including structural properties, complexity and competitive analysis, and algorithms. For~\LTSP, we present 3-approximation algorithms for the general case and show that it can be solved exactly in polynomial time if all files have the same size and the number of requests per file is at most one. We also show that~\LTSPR, the extension of~\LTSP in which requests may have different release times, is NP-hard. \OLTSP, the online version of~\LTSPR,  does not admit $c$-competitive algorithms for any given constant~$c$. Nevertheless, we present an algorithm for \OLTSP and show through computational experiments, with synthetic data and with instances extracted from  an imaging application in precision agriculture, that this strategy overperforms  the one currently used by implementations of the Linear Tape File System widely used in the industry~(\cite{ISO20919}). 
	

	The article is organized as follows. In Section~\ref{sec:literature}, we present a literature overview and show the connections between~\LTSP and several other combinatorial optimization problems. We present the notation that will be used throughout the text and define formally~\LTSP, \LTSPR, and \OLTSP in Section~\ref{sec:description}. The theoretical results and the algorithms are presented in Section~\ref{sec:theory}. Finally, we report on our computational experiments in Section~\ref{sec:experiments} and present our conclusions  in Section~\ref{sec:conclusion}.

	\section{Related work}\label{sec:literature}

	The Linear Tape Scheduling Problem (\LTSP) is closely related to  the Traveling Repairman Problem (\textbf{TRP}), a variant of the Traveling Salesman Problem (\textbf{TSP}) in which requests are generated on  vertices of a graph and the goal is to visit each vertex and execute its associated (repair) request while minimizing the sum of waiting times (\cite{AfratiCPPP86}).  
	\textbf{TRP} has also been studied under  different names, such as Delivery Man Problem or Minimum Latency Problem (see e.g., \cite{fischetti1993delivery}, \cite{simchi1991minimizing}, \cite{coene2011charlemagne}). \cite{AfratiCPPP86} showed that  \textbf{TRP} is NP-complete in the general case and that \textbf{Line-TRP}---a version of the problem in which all requests are distributed in a line (i.e.,  in~$\mathbb{R}^1$)---can be solved in polynomial time if requests do not have deadlines and their servicing times are either insignificant or  equal for all machines. Recently, \cite{Bock15} showed that \textbf{Line}-\textbf{TRP} with general processing times and deadlines is strongly NP-complete; the author also presented an overview of current knowledge on the computational complexity of several versions of the problem (and of \textbf{Line}-\textbf{TSP} as well, in which the goal is to minimize the tour duration); in particular, the complexity of  \textbf{Line}-\textbf{TRP} without time windows (i.e., without release times and due dates) and with general processing times is still unknown. Differently from~\textbf{TRP}, though, in~\LTSP the tape points to a different position after the execution of a read operation, as a file must be completely traversed (from the left to the right)  to be read. This aspect leads to a subtle but significant difference between \LTSP and \textbf{Line-TRP}  with general processing times and without time windows.

	\textbf{LTSP} is also similar to the Dial-a-Ride Problem, in which fleets of vehicles are used to transport products between vertices in a graph. An extensive classification of Dial-a-Ride problems is presented by~\cite{PaepeLSSS04}; in particular, they showed that the minimization of the sum of all completion times is NP-hard even in scenarios where there is just one vehicle of capacity one (i.e., the vehicle carries at most one product at any point in time) servicing all requests and all vertices are positioned in a line. \LTSP is equivalent to the special case of the Dial-a-Ride Problem in a line with a single vehicle  in which the trajectories described by each route are pairwise non-intersecting, the sources are always on the left of their associated destinations, and the vehicle starts to the right of the rightmost destination. The reduction employed by~\cite{PaepeLSSS04} (based on the Circular Arc Coloring Problem) cannot be used in this special case, leaving hence its complexity  open.

	The Linear Tape Scheduling Problem is related to the Single-Vehicle Scheduling Problem	on a Line (\textbf{Line-VSP}), in which  the goal is to minimize the completion time of jobs associated with vertices, while eventually taking into account  release time, deadlines, handling (or service) time, and initial/final position of the vehicle. The version with handling times and time windows is NP-complete (\cite{GareyJ79}). In scenarios where service times are equal to zero and only release times are considered, the problem can be solved in polynomial time~(\cite{PsaraftisSMK90}). \cite{KarunoNI98} presented a polynomial-time 1.5 approximation algorithm for \textbf{Line-VSP}  with release and handling times; later, \cite{KarunoN01}  showed that the problem admits a Polyonomial-Time Approximation Scheme (PTAS) if the number of vehicles is fixed and larger than one. For other results, see e.g., \cite{Tsitsiklis92} and~\cite{KarunoN05}.

	\LTSP can be formulated as the Time-Dependent Traveling Salesman Problem (\textbf{Time-Dependent TSP}) if all files are associated with the same number of requests, as penalties can be evaluated based on the 
	visit order of the vertices. 
	\cite{abeledo2013time, abeledo2010time} and~\cite{allahverdi1999review} showed the close relationship between the \textbf{Time-Dependent TSP} and~$1/ s_{ij}/\sum_j{C_j}$, the machine scheduling problem in which there is just one machine, setup times apply between the execution of tasks, and the objective is to minimize the sum of completion times (for more information about the notation of theoretical machine scheduling problems, see e.g., \cite{lawler1993sequencing}). In its more general setting, though, \LTSP is actually more closely related to $1/ s_{ij}/\sum_j{w_jC_j}$, the extension with weighted completion times. From the theoretical and algorithmic perspective, though, we were not able to identify results in the literature that could be properly leveraged and/or adapted to~\LTSP while exploring its unique tape-like structure.

	\LTSP is the version of the  problem in which the head is initially positioned over the end of the tape (i.e., the rightmost bit of the last file) and all requests are released at time~0.  One generalization is the Linear Tape Scheduling Problem with Release Times (\LTSPR), in which each request has a known release time, which defines  a lower bound on the time the associated file should be read for that particular request to be serviced. More interesting from a practical perspective, though, is \OLTSP, the online extension of~\LTSPR in which release times are unknown and decisions need to be taken (and executed) in real-time, before all requests become available. 
	\OLTSP is related to the Online Traveling Salesman Problem (\textbf{OLTSP}), a variation of the Traveling Salesman Problem in which new vertices to be visited are informed  during  the tour. If write requests are also considered, the resulting problems is related to the \textbf{Homing-OLTSP}, defined by~\cite{AusielloFLST01}, as the end of the tape needs to be visited from time to time. In the same work, the authors presented a $1.75$-competitive algorithm for the \textbf{Homing-OLTSP}  in a line and showed that no online algorithm can be better than 1.64-competitive for the problem. Handling times are zero for the \textbf{Homing-OLTSP}, but~\cite{Augustine02} showed that any $c$-competitive online algorithm for the problem with  zero handling times yields a $(c+1)$-competitive online algorithm for non-negative handling times. Recently, \cite{bjelde2017tight} closed the competitiveness gap of \textbf{OLTSP}  (i.e., the authors presented a 1.64-competitive algorithm for the problem) and presented new complexity results for several versions of the  offline Dial-A-Ride problem.

	Despite its practical relevance and its connections with several classical combinatorial optimization problems, \LTSP and its variants have not received significant attention so far in the operations research literature. The closest problem we identified in the literature is the Request Problem,
	in which one wishes to retrieve records that may be retrieved from several multi-record files distributed across a magnetic tape~(\cite{Day1965letter}). In the Request Problem, the goal is to minimize the time spent traversing the selected multi-record files (the time spent by the tape to move from one file to other is ignored), so it is basically a variation of the Set Covering Problem.
	Apart from this article,
	most related work simply sort the list of requests according to their offset relative to the current position of the tape (\cite{Schaeffer11,Zhang06}). More complex approaches incorporate the cost of re-positioning the tape on different tape models, but they rely on time-consuming characterizations of the physical tapes and on access to low level hardware information, and not on optimized scheduling algorithms tailored for the problem~(\cite{Sandsta99,Hillyer96}). The present article extends the work of~\cite{cardonha2016online}, which was focused on~\LTSP and~\OLTSP; one difference, however, is that we do not consider the scheduling of read operations in  online scenarios.



	\section{Problem description}\label{sec:description}

	Let~\tape be a single-track magnetic tape and~$\files = (\file_1,\file_2,\ldots,\file_n)$ be the set (or sequence) of \textit{files} stored in~$\tape$; we assume that~$\files$ does not change over the planning horizon. 
	The content of a file~$\file$ in~$\files$ may be retrieved through the submission of a (read) \emph{request} (or \emph{job}) to~$\tape$.
	The set of requests is denoted by~$\setrequests$, and each~$\request$ in~$\setrequests$ is associated with exactly one file~$\file(\request)$ in $\files$. The set of requests associated with file~$\file$ is denoted by $\setrequests(\file) \subseteq \setrequests$, with $\nrequests(\file) = |\setrequests(\file)|$; in certain scenarios, $\nrequests(\file(\request))$ may also be interpreted as the \textit{weight} of a single request for~$\file(\request)$.  We extend this notation and also use $\setrequests(\files') = \bigcup_{\file \in \files'}\setrequests(\file)$ to denote requests associated with files in~$\files' \subseteq \files$. $\tape$ has a single read head, and the \textit{position} of~$\tape$ refers to the bit on which the head is currently located.  An \textit{interval} $(l,r) \in \mathbb{N}^2$ denotes the region of the tape starting at block~$l$ and finishing at block~$r$.

	A file must be \textit{read} in order to have its associated requests \textit{executed} (or \textit{serviced}). In order to read~$\file$, the tape must move
	from the leftmost bit of its leftmost block, denoted by~$\fileleft(\file)$,  to the rightmost bit of its rightmost block~$\fileright(\file)$; movement direction matters, so files traversed from the right to the left cannot be read. The \textit{size} of~$\file$ is given by~$\size(\file) = \fileright(\file) - \fileleft(\file) + 1$; different files may have different sizes. For $\file_1,\file_2 \in \files$, we say that $\file_1$ is on the \textit{left} of~$\file_2$ in~$\files$ (and that $\file_2$ is on the \textit{right} of~$\file_1$) if $\fileleft(\file_1) < \fileleft(\file_2)$. We employ $\file + 1$ ($\file - 1$) to denote the file located immediately to the right (left) of $\file$ in the tape, whose ordering is defined by sequence~$\files$.

	We assume that all operations take place in a  discrete \textit{time-horizon}~$\timehorizon \subseteq \mathbb{N}$; each element of~$\timehorizon$ is referred to as a \textit{time-step}. The velocity with which the tape moves is constant and does not change if a read operation is being performed;
	without loss of generality, we assume that the tape traverses one block per time-step. 
	The \textit{length} of the tape equals the position of its rightmost unoccupied block, denoted by~$\lengthtape$, so we have  $\lengthtape = \sum_{\file \in \files}\size(\file)$. In all versions of the Linear Tape Scheduling Problem investigated in this work, we assume that~$\lengthtape$ is the initial position of~$\tape$; this assumption reflects the fact that tapes are optimized for write operations and should therefore be always ready to execute them as fast as possible whenever requested.

	In situations where the time-step~$i$ in which decisions are being made or actions are being executed  can be inferred from the context (e.g., the moment when the leftmost block of a given file is reached for the first time given a certain schedule), we may refer to  \textit{pending requests}, which are elements in~$\setrequests$ that have not been serviced yet (i.e., requests whose planned  service times are larger than~$i$). The waiting time of a request~$\request$ finishes as soon as~$\file(\request)$ starts to be read, and
	all pending requests associated with a file are serviced simultaneously when the file is traversed.
	
	Function $\releasetime: \setrequests \rightarrow \mathbb{N}$ defines the \textit{release time} of each request~$\request$, which is a lower bound for the execution of~$\request$;  if file~$\file$ is read at time-step~$i$ but there is some request~$\request$ in~$\setrequests$  such that~$\file(\request) = \file$ and $\releasetime(\request) > i$, $\file$ needs to be traversed again for~$\request$ to be serviced. 
	In scenarios where function~$\releasetime$ is not explicitly defined, we assume that $\releasetime(\request) = 0$ for every request~$\request$ in~$\setrequests$.

	Formally, the basic version of the Linear Tape Scheduling Problem is defined as follows:
	\begin{definition}[Linear Tape Scheduling Problem (\LTSP)] Given a set of files~$\files$ and a set of requests~$\setrequests$, 
		schedule all requests  in a way that the sum of all response times  is minimized.
	\end{definition}
	
	We  also investigate~\LTSPR, the extension of~\LTSP which allows non-trivial release times.

	\begin{definition}[Linear Tape Scheduling Problem with Release Times (\LTSPR)] Given a set of files~$\files$, a set of requests~$\setrequests$,  
		and a function $\releasetime$ defining release times, schedule all requests in a way that release times are respected and the sum of all response times is minimized.
	\end{definition}
	
	Finally, we also consider~\OLTSP, the online extension of~\LTSPR. The only information made available for algorithms addressing this version of the problem is the set of files and the current position of the tape. Set~$\setrequests$ is completely unknown a priori, with information about any individual request~$\request$ being made available only after its release in time-step~$\releasetime(\request)$.

	\begin{definition}[Online Linear Tape Scheduling Problem (\OLTSP)] Given a set of files~$\files$, a set of requests~$\setrequests$, and a function~$\releasetime$ defining release times whose values are unknown beforehand, 
		schedule all requests in a way that the sum of all response times is minimized.
	\end{definition}
	

	\section{Theoretical results and algorithms}\label{sec:theory}
	
	We present in this section several structural properties, complexity results, and algorithms for \LTSP, \LTSPR, and~\OLTSP. Early versions of some of these results (indicated in the article) and of algorithms \SSS, \GS, and \FGS appeared in~\cite{cardonha2016online}.

	
	\subsection{Linear Tape Scheduling Problem}
	
	In the Linear Tape Scheduling Problem, the tape is initially positioned at the end of the tape (i.e., position~$\lengthtape$) and all jobs are released at time~0.
	Optimal scheduling plans for~\LTSP can be divided in two phases.  In \textit{Phase 1}, the tape is traversed from its end (position~$\lengthtape$) to its beginning~(position~$1$), i.e., from the right to the left. This movement may be interrupted for the execution of one or more \textit{mini-batches}; similarly to files, a mini-batch~$\minibatch$ has a leftmost block~$\fileleft(\minibatch)$, a rightmost block~$\fileright(\minibatch)$, and a size~$\size(\minibatch) = \fileright(\minibatch) - \fileleft(\minibatch) + 1$. The execution of a mini-batch consists of a \textit{read movement}, in which the tape moves from~$\fileleft(\minibatch)$ to~$\fileright(\minibatch)$, and a \textit{return movement}, from $\fileright(\minibatch)$ to $\fileleft(\minibatch)$. 
	We say that file~$\file$ belongs to mini-batch~$\minibatch$ (or, similarly, that~$\minibatch$ contains~$\file$) if $\fileleft(\minibatch) \leq \fileleft(\file) < \fileright(\file) \leq \fileright(\minibatch)$. The set of files traversed by~$\minibatch$ is denoted by~$\files(\minibatch)$, and if $|\files(\minibatch)| = 1$, we say that~$\minibatch$ is an \textit{atomic mini-batch}. The leftmost and the rightmost blocks of a mini-batch~$\minibatch$ coincide with the leftmost and the rightmost bits of (non necessarily different) files~$\file$ and~$\file'$ in~$\files$, respectively, in any optimal solution of \LTSP, so we may also use pair~$(\file,\file')$ to represent~$\minibatch$.
	We denote by~$\minibatches$ the set of all possible mini-batches, i.e.,  $\minibatches = \{  (\file,\file') \in \files \times \files: l(\file) \leq l(\file')  \}$, and by~$\minibatches_1$ the set of mini-batches selected for execution in Phase 1. We use~$\files(\minibatches_1)$ to denote the set of files traversed by at least one mini-batch in~$\minibatches_1$, i.e., $\files(\minibatches_1) = \bigcup_{\minibatch \in \minibatches_1}\files(\minibatch)$, and~$\conflictminibatch_{\minibatch}$ to denote the set of mini-batches~$\minibatch'$ in~$\minibatches$ such that~$\files(\minibatch')$ intersects but does not contain all files traversed by~$\minibatch$, i.e., 
	$\conflictminibatch_{\minibatch} = \{ \minibatch'   \in \minibatches: 
	\files(\minibatch) \cap \files(\minibatch') \neq \emptyset \text{ and } 
	\files(\minibatch) \setminus \files(\minibatch') \neq \emptyset\}$.

	In \textit{Phase 2}, the tape moves from position~$1$ to position~$\lengthtape$. In any optimal solution for \LTSP, all pending jobs associated with file~$\file$ are serviced (simultaneously) whenever~$\file$ is traversed from the left to the right, so the tape never moves to the left in \textit{Phase 2}. Note that the beginning of Phase 2 always coincides  with the execution of the leftmost file~$\file$ such that $\nrequests(\file) > 0$; we can assume without loss of generality that this file is always the leftmost file of the tape, and note that this file never belongs to~$\files(\minibatches_1)$.

	\subsubsection{Structural results}
	
	We start with simple structural results that partially characterize optimal solutions for~\LTSP and will be useful in the upcoming sections.  First, we show that, given~$\minibatches_1$, finding the optimal order of execution is straightforward, so the optimization of~\LTSP is equivalent to the identification of an optimal set~$\minibatches_1$. Therefore,  we use $v(\minibatches_1)$ to denote the objective value (sum of response times for all requests) of the schedule defined by set~$\minibatches_1$. 
	\begin{proposition}\label{prop:offline-has-simple-schedule}
		For every optimal schedule of~$\LTSP$, there is an equivalent solution in which mini-batches are executed in the (decreasing) order defined by their leftmost bits.
	\end{proposition}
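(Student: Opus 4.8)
The plan is to prove the statement by an exchange argument on an arbitrary optimal schedule, exploiting the two-phase decomposition just described. Writing $\tau(\file)$ for the time-step at which a file~$\file$ is first read, the objective is $\sum_{\file \in \files}\nrequests(\file)\,\tau(\file)$, so I would decompose the cost into the contribution of the files read during Phase~1 (via the detours in~$\minibatches_1$) and that of the files read during Phase~2, noting that the latter all shift rigidly with the instant Phase~2 begins, i.e.\ with the total duration of Phase~1. The target configuration is the one in which the head, descending from~$\lengthtape$, performs each read-and-return detour exactly when it reaches the corresponding~$\fileleft(\minibatch)$; since larger leftmost bits are reached first, this is precisely the decreasing-$\fileleft$ order.

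First I would record the local cost of a mini-batch: executed with the head at~$\fileleft(\minibatch)$, it costs $2(\size(\minibatch)-1)$ time-steps and returns the head to~$\fileleft(\minibatch)$, so its only global effect is to delay every later operation by that amount. I would then define an \emph{inversion} as a pair of consecutively executed mini-batches $\minibatch,\minibatch'$ with~$\minibatch$ first yet $\fileleft(\minibatch)<\fileleft(\minibatch')$; such a pair forces the head, after descending past~$\minibatch'$ and servicing~$\minibatch$, to travel back rightward to~$\fileleft(\minibatch')$. One exchange step removes one inversion, bringing the schedule closer to the sorted order.

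The heart of the proof is the cost comparison for a single exchange step, and I expect the response-time bookkeeping here to be the main obstacle. Partitioning the affected requests into those serviced by~$\minibatch$, those serviced by~$\minibatch'$, and those serviced strictly later (all of Phase~2 included), a plain transposition of the two detours reads the requests of~$\minibatch'$ and of everything later earlier, but can read the requests of~$\minibatch$ \emph{later}; when~$\minibatch$ is a heavy left-hand batch this can raise the total, so a bare swap is not enough. The fix I would use is to replace the rightward excursion by a single larger detour reading the whole span $[\fileleft(\minibatch),\fileright(\minibatch')]$ in one pass: this reads the files of~$\minibatch$ at the very same instant as before, reads the files of~$\minibatch'$ and of the intervening gap no later, and shortens Phase~1 by $2(\size(\minibatch)-1)$, so every Phase~2 read occurs no later as well. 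Hence no response time increases.

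Finally I would close the induction. Each step strictly reduces the number of rightward excursions while not increasing the objective, so after finitely many steps Phase~1 is a monotone leftward sweep broken only by read-and-return detours, whose leftmost bits are therefore encountered --- and serviced --- in decreasing order. Starting from an optimal schedule and never increasing the cost, the resulting sorted schedule is again optimal and is the desired equivalent solution; this also makes the objective depend only on the selected collection of detours, justifying the notation~$\objectivevalue(\minibatches_1)$.
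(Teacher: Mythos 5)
Your proposal follows essentially the same route as the paper's own proof: the paper likewise rejects repairing an inversion by a bare transposition and instead replaces the consecutive inverted pair $\minibatch,\minibatch'$ (executed in that order, with $\fileleft(\minibatch)<\fileleft(\minibatch')$) by the single merged detour $\minibatch''=(\fileleft(\minibatch),\fileright(\minibatch'))$, observing exactly as you do that this avoids traversing the interval $(\fileleft(\minibatch),\fileleft(\minibatch'))$ twice, so the files of~$\minibatch$ are read at the same instants, those of~$\minibatch'$ and of the gap strictly earlier, and everything afterwards (including Phase~2) shifts earlier. Your explicit termination count and the remark that this justifies the notation $\objectivevalue(\minibatches_1)$ are implicit in the paper but add nothing different in substance.

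The one point you must patch is the degenerate case in which the span of~$\minibatch'$ is contained in that of~$\minibatch$, i.e.\ $\fileright(\minibatch')\leq\fileright(\minibatch)$. Such configurations can occur in an optimal schedule: a nested $\minibatch'$ executed right after $\minibatch$ is redundant, and if no requests remain pending afterwards its execution has zero marginal cost, so optimality alone does not exclude it. In that case your merged detour $[\fileleft(\minibatch),\fileright(\minibatch')]$ is \emph{shorter} than $\minibatch$ itself, the files of~$\minibatch$ lying to the right of $\fileright(\minibatch')$ are no longer read during the detour, and the claim that the files of~$\minibatch$ are read ``at the very same instant as before'' fails. The paper avoids this with a separate first case: if all requests of $\files(\minibatch')$ are already serviced when $\minibatch'$ starts (automatic in the nested situation), then $\minibatch'$ is simply deleted, which cannot increase the objective. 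Equivalently, you can define the merged detour as $\left(\fileleft(\minibatch),\max(\fileright(\minibatch),\fileright(\minibatch'))\right)$, as the paper itself does in the proof of its Corollary~1. With that one-line fix, your exchange argument coincides with the paper's proof.
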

	\begin{proof}
		Let~$\minibatches_1$ be an optimal set of mini-batches, $\minibatch$ and~$\minibatch'$ be mini-batches in~$\minibatches_1$ such that $\fileleft(\minibatch) < \fileleft(\minibatch')$, and assume that $\minibatch'$ is scheduled to be executed immediately after $\minibatch$.
		If all requests associated with files in~$\files(\minibatch')$ were serviced by the time~$\minibatch'$ starts to be executed, $\objectivevalue(\minibatches_1 \setminus \{\minibatch'\}) \leq \minibatches_1$, so $\minibatch'$ can be safely removed from the set of mini-batches. Otherwise, 	 
		by replacing $\minibatch$ and $\minibatch'$ for $\minibatch'' = \left(\fileleft(\minibatch),\fileright(\minibatch')\right)$  in~$\minibatches_1$ and in the execution sequence of mini-batches, one would obtain a new set~$\minibatches_1'$ and an associated execution plan that would avoid having the interval $(\fileleft(\minibatch), \fileleft(\minibatch'))$ being unnecessarily traversed twice before the execution of~$\minibatch'$, thus reducing response times for pending requests associated with files in~$\files(\minibatch')$ and contradicting the optimality of~$\minibatches_1$.
	\end{proof}
	
	\begin{corollary}\label{one-minibatches-per-file}
		No pair of mini-batches share the same leftmost file.
	\end{corollary}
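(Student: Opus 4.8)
The plan is to leverage Proposition~\ref{prop:offline-has-simple-schedule} together with the fact that all requests are released at time~$0$ in~\LTSP. First I would invoke Proposition~\ref{prop:offline-has-simple-schedule} to assume that the mini-batches of an optimal set~$\minibatches_1$ are executed in decreasing order of their leftmost bits. Since files occupy disjoint intervals, two mini-batches share a leftmost file~$\file$ if and only if they share the leftmost bit~$\fileleft(\file)$; in this normalized order such mini-batches therefore appear consecutively, so it suffices to rule out a pair $\minibatch_1,\minibatch_2 \in \minibatches_1$ with $\fileleft(\minibatch_1)=\fileleft(\minibatch_2)=\fileleft(\file)$.

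The key observation is that sharing the left endpoint~$\fileleft(\file)$ forces containment: assuming $\fileright(\minibatch_1) \leq \fileright(\minibatch_2)$ gives $\files(\minibatch_1) \subseteq \files(\minibatch_2)$. I would then reorder the consecutive pair so that the larger mini-batch~$\minibatch_2$ is executed first. Because the tape reads~$\minibatch_2$ from left to right starting at~$\fileleft(\file)$, every file in~$\files(\minibatch_2)$, and in particular every file in~$\files(\minibatch_1)$, is traversed and, since all requests are released at time~$0$, permanently serviced during this read.

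Consequently, by the time~$\minibatch_1$ would be executed it has no pending requests left to service. Removing~$\minibatch_1$ from~$\minibatches_1$ eliminates its read and return movements, advancing every subsequent event (the remaining mini-batches and all of Phase~2) in time while leaving the service times of files in~$\files(\minibatch_1)$ unchanged, as they are read at the same offsets within~$\minibatch_2$ as they would have been within~$\minibatch_1$, both reads beginning at~$\fileleft(\file)$. Hence $\objectivevalue(\minibatches_1 \setminus \{\minibatch_1\}) \leq \objectivevalue(\minibatches_1)$, yielding an equivalent (or strictly better) optimal schedule in which~$\file$ is the leftmost file of at most one mini-batch; iterating over all such pairs eliminates every shared leftmost file.

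The main obstacle I anticipate is the timing bookkeeping in the final step: I must confirm that reordering~$\minibatch_1,\minibatch_2$ and then deleting~$\minibatch_1$ delays no request. This reduces to the two facts already isolated, namely that (i) files in~$\files(\minibatch_1)$ are serviced at identical times whether read inside~$\minibatch_1$ or inside~$\minibatch_2$, since both reads start at~$\fileleft(\file)$, and (ii) deleting a mini-batch only shortens the schedule. The argument therefore closely parallels the merge step used in the proof of Proposition~\ref{prop:offline-has-simple-schedule}, and I would phrase it so as to reuse that reasoning rather than recompute response times from scratch.
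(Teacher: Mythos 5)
Your proof is correct and follows essentially the same route as the paper: the paper applies the merging argument of Proposition~\ref{prop:offline-has-simple-schedule} with $\minibatch'' = \left(\fileleft(\minibatch), \max(\fileright(\minibatch),\fileright(\minibatch'))\right)$, i.e., it collapses the two mini-batches into the larger one, which is exactly what your reorder-then-delete step accomplishes once you observe that a shared leftmost bit forces containment of the file sets. The only cosmetic difference is that you make the timing bookkeeping (identical service offsets from $\fileleft(\file)$, and the downstream advance from deleting the redundant mini-batch) explicit, whereas the paper leaves it implicit by reference to Proposition~\ref{prop:offline-has-simple-schedule}.
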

	\begin{proof} Apply the same arguments used in the proof of Proposition~\ref{prop:offline-has-simple-schedule} assuming  $\fileleft(\minibatch) = \fileleft(\minibatch')$ 
		and using $\minibatch'' = \left(\fileleft(\minibatch), \max(\fileright(\minibatch),\fileright(\minibatch')) \right)$.	
	\end{proof}
	\begin{corollary}\label{minibatches-start-on-first-pass}
		Every mini-batch starts from a position that is being visited for the first time. 
	\end{corollary}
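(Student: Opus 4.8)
The plan is to combine the execution order guaranteed by Proposition~\ref{prop:offline-has-simple-schedule} with the observation that a single mini-batch never drives the head to the left of its own leftmost block. First I would invoke Proposition~\ref{prop:offline-has-simple-schedule} to assume, without loss of generality, that the mini-batches $\minibatch_1, \minibatch_2, \ldots, \minibatch_k$ chosen for Phase~1 (i.e., those in $\minibatches_1$) are executed in decreasing order of their leftmost bits. By Corollary~\ref{one-minibatches-per-file} no two mini-batches share the same leftmost file, and since each leftmost block coincides with the leftmost bit of a file, these values are pairwise distinct; hence the execution order satisfies the strict chain $\fileleft(\minibatch_1) > \fileleft(\minibatch_2) > \cdots > \fileleft(\minibatch_k)$.

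Next I would track the leftmost position ever occupied by the head. The key fact is that executing one mini-batch $\minibatch$ — a read movement from $\fileleft(\minibatch)$ to $\fileright(\minibatch)$ followed by a return movement from $\fileright(\minibatch)$ back to $\fileleft(\minibatch)$ — confines the head to the interval $[\fileleft(\minibatch), \fileright(\minibatch)]$, so it never visits any block strictly to the left of $\fileleft(\minibatch)$. Since Phase~1 begins with the head at position $\lengthtape$ and the head travels only leftward in the gaps between mini-batches, the leftmost block the head has ever occupied immediately before $\minibatch_i$ begins equals $\fileleft(\minibatch_{i-1})$ for $i \geq 2$ (and equals $\lengthtape$ for $i = 1$). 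Because $\fileleft(\minibatch_i) < \fileleft(\minibatch_{i-1})$, the head reaches $\fileleft(\minibatch_i)$ only by moving strictly past every block it has previously seen, so $\fileleft(\minibatch_i)$ is visited for the first time. As $\fileleft(\minibatch_i)$ is precisely the block from which $\minibatch_i$ starts its read movement, the claim follows.

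The only delicate point I anticipate is justifying that the head genuinely never descends below $\fileleft(\minibatch_i)$ before starting $\minibatch_i$: this rests on Phase~1 being composed solely of leftward travel interleaved with these self-contained read-and-return mini-batches, a structure supplied by the two-phase decomposition together with the canonical ordering of Proposition~\ref{prop:offline-has-simple-schedule}. Once that monotonicity of the leftmost reached position is pinned down, the argument is a direct induction on $i$ and requires no further computation.
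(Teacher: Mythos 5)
Your proof is correct and matches the argument the paper intends: the corollary is stated without an explicit proof precisely because it follows, as you show, from the decreasing execution order of Proposition~\ref{prop:offline-has-simple-schedule}, the distinctness of leftmost bits from Corollary~\ref{one-minibatches-per-file}, and the Phase-1 structure of leftward travel interleaved with self-contained read-and-return movements. Your induction on the leftmost position ever reached by the head is exactly the right way to make this implicit reasoning rigorous.
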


	\begin{proposition}\label{minibatches-requested-extremities}
		The leftmost and the rightmost files covered by a mini-batch are associated with at least one request each. 
	\end{proposition}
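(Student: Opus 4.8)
The plan is to argue by contradiction via a trimming exchange, treating the two extremities separately. Let $\minibatches_1$ be an optimal set of Phase~1 mini-batches and suppose some $\minibatch \in \minibatches_1$, with leftmost file $\file$ and rightmost file $\file'$, has a requestless extremity. First I would dispose of the trivial case in which $\minibatch$ covers no requested file at all: such a mini-batch services nothing, so removing it from $\minibatches_1$ merely shortens the trajectory the tape follows after it first reaches $\fileleft(\minibatch)$ and hence cannot increase any response time, giving $\objectivevalue(\minibatches_1 \setminus \{\minibatch\}) < \objectivevalue(\minibatches_1)$ and contradicting optimality. From here on I may assume $\minibatch$ contains at least one requested file; let $g$ and $g'$ denote, respectively, the rightmost and the leftmost requested file in $\files(\minibatch)$.

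For the rightmost extremity, suppose $\nrequests(\file') = 0$ and replace $\minibatch$ by the mini-batch $\minibatch'' = \left(\fileleft(\minibatch), \fileright(g)\right)$; this is a valid schedule since the files trimmed away carry no requests. The key observation is that the read movement of a mini-batch starts at $\fileleft(\minibatch)$ and proceeds rightward, so the time at which each file inside the batch is read depends only on $\fileleft(\minibatch)$ and on that file's position, not on $\fileright(\minibatch)$; trimming the right end therefore leaves the response times of all requests serviced inside $\minibatch$ unchanged. The return movement, however, now turns back at $\fileright(g)$, so the tape resumes its leftward Phase~1 motion exactly $2\left(\fileright(\minibatch) - \fileright(g)\right)$ time-steps sooner, and every request serviced afterward is serviced strictly earlier. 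Since a mini-batch is by definition a detour terminated by a return to $\fileleft(\minibatch)$ and followed by further leftward travel, there is always at least one such downstream request (for instance the leftmost requested file, read no earlier than the start of Phase~2), so $\objectivevalue$ strictly decreases, a contradiction.

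The leftmost extremity is handled by the symmetric trimming that replaces $\minibatch$ by $\minibatch'' = \left(\fileleft(g'), \fileright(\minibatch)\right)$ when $\nrequests(\file) = 0$. Moving the left endpoint rightward to $\fileleft(g')$ both shrinks the detour and makes it begin earlier, since the tape, travelling left in Phase~1, reaches $\fileleft(g')$ before it reaches $\fileleft(\minibatch)$. A short computation shows that every requested file inside the batch, as well as every request serviced after the detour, is now reached $2\left(\fileleft(g') - \fileleft(\minibatch)\right)$ time-steps sooner; hence all affected response times strictly decrease, again contradicting optimality.

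The step requiring the most care is verifying, in the right-extremity case, that some request is genuinely serviced \emph{after} the mini-batch, so that the shortened return movement yields a strict improvement rather than a tie (the in-batch requested files alone are insensitive to $\fileright(\minibatch)$). The resolution is to exploit the defining feature of a mini-batch, namely that it is a detour closed by a return movement and followed by continued leftward travel, which distinguishes it from the terminal Phase~2 sweep; a purely terminal left-to-right pass that services the remaining requests is, under the Phase~1/Phase~2 decomposition, not a mini-batch, so any genuine mini-batch always has nonempty downstream work. The left-extremity case is comparatively routine, as the in-batch requested files themselves already benefit from the earlier and shorter detour.
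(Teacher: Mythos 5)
Your proof is correct and follows essentially the same route as the paper's: an exchange argument that trims away the requestless extremities and observes that no response time increases while at least one downstream (Phase~2) request strictly benefits. The paper performs both trims in a single step, replacing $\minibatch$ by the largest contained mini-batch whose two extreme files both carry requests, whereas you handle the right end, the left end, and the degenerate fully-requestless batch as separate cases, but the underlying exchange is the same.
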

	\begin{proof} 	Let~$\minibatch \in \minibatches_1$ be a mini-batch for which these conditions do not hold. By replacing~$\minibatch$ for the largest mini-batch~$\minibatch' = (\file,\file')$ such that $\files(\minibatch') \subseteq \files(\minibatch)$ and $\nrequests(\file)\nrequests(\file') > 0$,  we obtain a new schedule~$\minibatches_1' = (\minibatches_1 \setminus \{\minibatch'\}) \cup \{(\file,\file')\}  $ such that $\objectivevalue(\minibatches_1') \leq  \objectivevalue(\minibatches_1)$; moreover, if $\minibatch$ is not the leftmost mini-batch in  $\minibatches_1$, the inequality is strict, as at least one request will have its response time reduced by $2(\fileleft(\minibatch) - \fileleft(\minibatch'))$.	
	\end{proof}
	\begin{corollary}\label{minibatches-finish-on-untouched-file}
		The rightmost file of every  mini-batch~$\minibatch \in \minibatches_1$ is traversed for the first time when~$\minibatch$ is executed. 
	\end{corollary}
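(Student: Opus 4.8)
The plan is to argue by contradiction, extending the exchange argument used for Proposition~\ref{minibatches-requested-extremities}. Let $\minibatch \in \minibatches_1$ be a mini-batch whose rightmost file $\file'$ (so $\fileright(\file') = \fileright(\minibatch)$, and $\file'$ carries a request by Proposition~\ref{minibatches-requested-extremities}) is, contrary to the claim, traversed for the first time \emph{before} $\minibatch$ is executed. The first thing I would record is that in Phase~1 the only left-to-right (reading) movements occur inside the read phases of mini-batches; the sweep from $\lengthtape$ toward $1$ is purely right-to-left and reads nothing. Hence an earlier traversal of $\file'$ must have occurred during some mini-batch scheduled before $\minibatch$. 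Because in \LTSP all requests are released at time $0$ and a file's pending requests are serviced the instant it is first read, every request on $\file'$ is already serviced by the time $\minibatch$ begins.

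Next I would produce a strictly smaller admissible replacement for $\minibatch$. By Corollary~\ref{minibatches-start-on-first-pass}, $\minibatch$ starts from $\fileleft(\minibatch)$, a position visited for the first time; since reading the leftmost file of $\minibatch$ would require a left-to-right pass beginning at $\fileleft(\minibatch)$, that file has not yet been read and still carries pending requests. Thus the set of files in $\files(\minibatch)$ with pending requests when $\minibatch$ begins is nonempty. Let $\file''$ be the rightmost such file. As all requests on $\file'$ are already serviced, $\file''$ lies strictly to the left of $\file'$, so the mini-batch $\minibatch' = (\file_{\text{left}}, \file'')$ — keeping the left endpoint of $\minibatch$ and contracting its right endpoint to $\file''$ — satisfies $\files(\minibatch') \subseteq \files(\minibatch)$ and $\fileright(\minibatch') < \fileright(\minibatch)$, while still servicing every pending request that $\minibatch$ would service.

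I would then bound the change in objective value. Replacing $\minibatch$ by $\minibatch'$ in $\minibatches_1$ shortens both the read and the return movement by $\fileright(\minibatch) - \fileright(\minibatch')$, so the tape returns to $\fileleft(\minibatch)$ exactly $2(\fileright(\minibatch) - \fileright(\minibatch'))$ time-steps earlier, and every service occurring after this point — all later Phase~1 mini-batches and the whole of Phase~2 — is shifted earlier by the same amount. Since the leftmost requested file of the tape is always serviced in Phase~2, after every mini-batch of $\minibatches_1$, at least one request has its response time strictly reduced, giving $\objectivevalue(\minibatches_1') < \objectivevalue(\minibatches_1)$ and contradicting optimality. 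This forces $\file'$ to remain untouched until $\minibatch$ runs, which is the assertion.

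The part needing the most care is the bookkeeping in the last paragraph: checking that contracting the right endpoint (i)~discards only files whose requests are already serviced, so nothing is lost, and (ii)~propagates the $2(\fileright(\minibatch) - \fileright(\minibatch'))$ saving intact to a genuinely later service, so the improvement is strict. Invoking Corollary~\ref{minibatches-start-on-first-pass} to guarantee a nonempty pending set is precisely what rules out the degenerate case in which $\minibatch$ would have to be deleted outright rather than contracted, and thereby keeps the modification well-defined.
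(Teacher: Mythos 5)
Your proof is correct and follows essentially the route the paper intends: the corollary is stated without proof as an immediate consequence of Proposition~\ref{minibatches-requested-extremities}, since in \LTSP all requests are released at time~$0$ and are serviced at a file's first traversal, so a previously traversed rightmost file is equivalent to an unrequested one, and the same contraction/exchange argument applies. Your write-up simply makes that implicit argument explicit (using Corollary~\ref{minibatches-start-on-first-pass} to get a nonempty pending set and the Phase-2 servicing of the leftmost requested file to get strictness), and all the bookkeeping checks out.
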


	\begin{proposition}\label{minibatches-do-not-intersect}
		For every instance of~\LTSP, there is an optimal solution~$\minibatches_1$ which does not contain simultaneously mini-batch~$\minibatch$ and any element in $\conflictminibatch_{\minibatch}$.
	\end{proposition}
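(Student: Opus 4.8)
The plan is to argue by an extremal exchange argument. First I would record that, by Corollary~\ref{one-minibatches-per-file}, no two mini-batches of $\minibatches_1$ share a leftmost file, so the asserted property is equivalent to the file sets $\files(\minibatch)$ being pairwise disjoint: if $\minibatch,\minibatch'\in\minibatches_1$ then the conflict-freeness of $\minibatches_1$ gives both $\minibatch'\notin\conflictminibatch_{\minibatch}$ and $\minibatch\notin\conflictminibatch_{\minibatch'}$, i.e. $\files(\minibatch)\cap\files(\minibatch')=\emptyset$ or $\files(\minibatch)=\files(\minibatch')$, and the latter is excluded by Corollary~\ref{one-minibatches-per-file}. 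I would then take an optimal solution $\minibatches_1$ minimizing the total read length $\sum_{\minibatch\in\minibatches_1}\size(\minibatch)$ (note that in such a minimizer every mini-batch has requested extremities, since otherwise the shrink in the proof of Proposition~\ref{minibatches-requested-extremities} would lower the read length at no greater cost), and assume for contradiction that it contains a conflicting pair $\minibatch,\minibatch'$. Relabel so that $\fileleft(\minibatch)<\fileleft(\minibatch')$ (possible by Corollary~\ref{one-minibatches-per-file}); by Proposition~\ref{prop:offline-has-simple-schedule}, $\minibatch'$ is then executed before $\minibatch$. Comparing right endpoints, exactly two configurations can occur: a \emph{partial overlap} $\fileright(\minibatch)<\fileright(\minibatch')$, or a \emph{containment} $\files(\minibatch')\subsetneq\files(\minibatch)$ with $\fileright(\minibatch')<\fileright(\minibatch)$; the case $\fileright(\minibatch)=\fileright(\minibatch')$ is impossible because it would make the rightmost file of $\minibatch$ be read already by $\minibatch'$, contradicting Corollary~\ref{minibatches-finish-on-untouched-file}.

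The partial-overlap case I expect to dispose of immediately, since it is essentially Corollary~\ref{minibatches-finish-on-untouched-file} again: as $\minibatch'$ runs first and reads everything up to $\fileright(\minibatch')>\fileright(\minibatch)$, the rightmost file of $\minibatch$ is already serviced before $\minibatch$ runs. Staying inside the extremal setup, I would replace $\minibatch$ by the shorter excursion $(\fileleft(\minibatch),\fileleft(\minibatch')-1)$. The files removed from $\minibatch$ all lie in $\files(\minibatch')$, hence are still serviced by $\minibatch'$ no later than before, while the shortened excursion lets the head reach every later position earlier, so no response time increases; as $\sum_{\minibatch}\size(\minibatch)$ strictly drops, this contradicts minimality.

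The containment case is the crux and the main obstacle. Here the inner batch $\minibatch'$ genuinely advances the service of the files it covers, so a single shrink need not help; instead I would compare a small family of uncrossing operations, each of which strictly decreases the read length: (a) delete $\minibatch'$, letting $\minibatch$ service those files on its own pass; (b) delete $\minibatch$, deferring the files it alone covers (the parts of $\files(\minibatch)$ left and right of $\files(\minibatch')$) to later passes, i.e. to Phase~2; and (c) replace $\{\minibatch,\minibatch'\}$ by the two disjoint mini-batches $(\fileleft(\minibatch'),\fileright(\minibatch))$ and $(\fileleft(\minibatch),\fileleft(\minibatch')-1)$, which leaves the service time of the files inside $\minibatch'$ unchanged—that excursion still begins at $\fileleft(\minibatch')$, reached at the same time—while servicing the right part of $\minibatch$ strictly earlier. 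For each operation I would write the change of $\objectivevalue$ as a signed combination of the total request weights $\nrequests(\cdot)$ of the three zones (left of, inside, and right of $\files(\minibatch')$) together with the weight of the files processed after the head leaves $\fileleft(\minibatch)$. The difficulty, which is the heart of the proof, is that the sign of each individual change depends on these weights and on the three gap lengths, so none of (a)--(c) is non-worsening on its own; one must carry out a case analysis showing that the \emph{minimum} of the three changes is always non-positive. I expect Proposition~\ref{minibatches-requested-extremities}—which forces each of the three zones to carry at least one request in our minimizer—to be exactly what closes the inequalities (it is precisely the hypothesis that fails in the naive "empty right zone'' configurations).

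Finally, in every case I obtain an optimal solution of strictly smaller total read length, contradicting the choice of $\minibatches_1$. Hence the minimizer contains no conflicting pair, so the file sets of its mini-batches are pairwise disjoint, which is the statement for $\LTSP$. The reduction and the partial-overlap case are routine given Corollaries~\ref{one-minibatches-per-file} and~\ref{minibatches-finish-on-untouched-file}; I anticipate that the only delicate part is the bookkeeping that certifies, in the containment case, that at least one of the three uncrossings never raises the objective.
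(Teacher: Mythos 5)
Your partial-overlap case is essentially the paper's entire proof: the paper picks $\minibatch' \in \conflictminibatch_{\minibatch}$ with $\fileleft(\minibatch') < \fileleft(\minibatch)$ (which, given the definition of $\conflictminibatch_{\minibatch}$, forces a proper crossing) and performs exactly your truncation, replacing the left batch by the excursion that stops at the file immediately to the left of $\fileleft(\minibatch)$. The genuine problem is your containment case, and it is not merely unfinished --- it cannot be closed, because under your reading (mini-batches with pairwise disjoint file sets, which is indeed what the literal set-builder definition of $\conflictminibatch_{\minibatch}$ implies) the proposition is false. Take four files $\file_1,\file_2,\file_3,\file_4$, left to right, with sizes $10,10,1,10$ and request counts $1,2,100,1$; the head starts at position $\lengthtape=31$, and response times are counted, as in the paper, until the read of the file begins. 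The nested solution $\minibatches_1=\{(\file_2,\file_4),(\file_3,\file_3)\}$ services $\file_3$ at time $11$, $\file_2$ at $23$, $\file_4$ at $34$, and $\file_1$ at $75$, for a total of $1255$. Among solutions with pairwise disjoint mini-batches, only $\{(\file_3,\file_3)\}$, $\{(\file_3,\file_4)\}$, $\{(\file_2,\file_2),(\file_3,\file_3)\}$, and $\{(\file_2,\file_2),(\file_3,\file_4)\}$ service $\file_3$ at time $11$, with values $1273$, $1291$, $1273$, and $1271$ respectively; every other disjoint solution services the $100$ requests of $\file_3$ no earlier than time $31$ and already costs at least $3100$. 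So every disjoint solution is strictly worse than the nested one. In particular, your three uncrossing operations evaluate here to $3247$ (delete the inner batch), $1273$ (delete the outer batch), and $1271$ (the split): the minimum of the three changes is strictly positive, and no bookkeeping via Proposition~\ref{minibatches-requested-extremities} can rescue the case analysis you anticipate.

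The way out is to recognize that the proposition is intended --- and is only true --- as a laminarity statement: some optimal solution contains no \emph{crossing} pair, while nested pairs must be allowed. This is exactly what the paper's proof establishes (it never addresses containment), and the paper itself relies on nesting being possible: the subsection on the hardness of \LTSP explicitly notes that an optimal solution containing the mini-batch $(\file_i,\file_{i+2})$ may also contain $(\file_{i+1},\file_{i+1})$. The displayed definition of $\conflictminibatch_{\minibatch}$ should therefore be read symmetrically, with the additional requirement $\files(\minibatch')\setminus\files(\minibatch)\neq\emptyset$. Under that reading, your opening reduction to pairwise disjointness is precisely where your proof diverges from the provable statement; once it is dropped, your extremal setup together with your partial-overlap argument already constitutes a complete proof, and the containment analysis you call ``the heart of the proof'' is not needed at all.
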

	\begin{proof}
		Let us assume that~$\minibatch \in \minibatches_1$ and  let~$\minibatch'$ be an element of~$\conflictminibatch_{\minibatch}$ such that $\minibatch' < \minibatch$. As all requests associated with files in~$\files(\minibatch)$ will have been serviced by the time~$\fileleft(\minibatch)$ is reached for the third time (i.e., after the execution of~$\minibatch$ and during the read movement of~$\minibatch'$), we do not reduce response times by letting the head traverse interval $(\fileleft(\minibatch),  \fileright(\minibatch'))$,
		so~$\minibatch'$  
		can be replaced for~$(\fileleft(\minibatch'),\minibatch - 1)$, where $\minibatch -1$ is the file located to the left of~$\minibatch$, 
		without increasing the sum of the response times. 
	\end{proof}


	We conclude  with two essential results  for some of the algorithms presented in this article. 
	
	\begin{proposition}[\cite{cardonha2016online}, Proposition 6]\label{prop:executeinphase1}
		Given an arbitrary set~$\minibatches_1$,
		$v(\minibatches_1 \cup \{ (\file,\file )  \} )  <  v(\minibatches_1)$ for
		file~$\file \in \files \setminus \files(\minibatches_1)$ if
		\begin{eqnarray}\label{ineq:executeinphase1}
		\nrequests(\file)\left(\fileleft(\file) +  \sum_{ \substack{\minibatch \in \minibatches_1\\ \minibatch < \file }}\size(\minibatch)  \right) 
		>
		\size(\file)\left(    \sum\limits_{  \substack{ \file' \in  \files \\ \file' < \file}   }\nrequests(\file')   + 
		\sum\limits_{  \substack{ \file' \in  \files \setminus \files(\minibatches_1) \\ \file' > \file}   }\nrequests(\file')        \right).		
		\end{eqnarray}
	\end{proposition}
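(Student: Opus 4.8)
The plan is to compute the increment $\Delta := v(\minibatches_1 \cup \{(\file,\file)\}) - v(\minibatches_1)$ in closed form and to read off inequality~\eqref{ineq:executeinphase1} as the condition $\Delta < 0$. Using Proposition~\ref{prop:offline-has-simple-schedule} I evaluate both schedules with their Phase-1 mini-batches ordered by decreasing leftmost bit. Since the leftmost and rightmost blocks of every mini-batch coincide with file boundaries, the hypothesis $\file \in \files \setminus \files(\minibatches_1)$ forces each $\minibatch \in \minibatches_1$ to lie \emph{entirely} to the left ($\minibatch < \file$) or \emph{entirely} to the right ($\minibatch > \file$) of $\file$, with Proposition~\ref{minibatches-do-not-intersect} guaranteeing that the batches themselves do not interfere. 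In this canonical order, inserting the atomic mini-batch $(\file,\file)$ is nothing more than performing one read-return over $\file$ the first instant the head reaches $\fileleft(\file)$ in Phase 1: everything scheduled before that instant is untouched, and everything scheduled after it is pushed back by the duration of that read-return.

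First I would quantify the \emph{gain}. Without $(\file,\file)$ the $\nrequests(\file)$ requests of $\file$ are served in Phase 2, on the head's single rightward pass over $\fileleft(\file)$; with it they are served at the first leftward arrival at $\fileleft(\file)$ in Phase 1. Writing both service instants explicitly, the read-returns of the mini-batches lying to the right of $\file$ occur identically in the two schedules and cancel, and the remaining difference works out to be proportional to $\fileleft(\file) + \sum_{\minibatch \in \minibatches_1,\, \minibatch < \file}\size(\minibatch)$; multiplied by $\nrequests(\file)$ this is the total gain, proportional to the left-hand side of~\eqref{ineq:executeinphase1}.

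Next I would quantify the \emph{loss}, namely the delay imposed by the read-return of $(\file,\file)$—whose length is controlled by $\size(\file)$—on every request served after it. Characterizing this set is the conceptual crux. Every file to the left of $\file$ is served after $(\file,\file)$, whether by a subsequent Phase-1 mini-batch or in Phase 2, and is therefore delayed; every file to the right of $\file$ that belongs to $\files(\minibatches_1)$ is served \emph{earlier} in Phase 1 and is \emph{not} delayed; every file to the right of $\file$ outside $\files(\minibatches_1)$ is served in Phase 2 and \emph{is} delayed; and $\file$ itself, being the request served, pays nothing. The delayed requests thus number exactly $\sum_{\file' < \file}\nrequests(\file') + \sum_{\file' \in \files \setminus \files(\minibatches_1),\, \file' > \file}\nrequests(\file')$, so the loss is proportional to the right-hand side of~\eqref{ineq:executeinphase1}.

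Finally, $\Delta$ equals the loss minus the gain. Both expressions carry a common factor of $2$—the read-return traverses $\file$ twice, and the detour separating the Phase-1 and Phase-2 visits to $\fileleft(\file)$ is covered once on each leg—so the comparison $\Delta < 0$ simplifies, after cancelling this factor and fixing the model's endpoint conventions, to exactly~\eqref{ineq:executeinphase1}. I expect the third step to be the main obstacle: partitioning \emph{all} requests into delayed and non-delayed, and in particular checking that the requests served by right-of-$\file$ mini-batches genuinely drop out, is precisely where the hypothesis $\file \notin \files(\minibatches_1)$, together with the non-overlapping whole-file structure of the batches in $\minibatches_1$, is used. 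The rest is clerical bookkeeping of arrival times that pins down the coefficients but leaves the structure of the argument untouched.
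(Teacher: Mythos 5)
Your proposal is correct and follows essentially the same route as the paper's proof: the left-hand side of Inequality~\ref{ineq:executeinphase1} (times two) is read as the reduction in waiting time for the requests of~$\file$ when it is serviced at the head's first arrival at~$\fileleft(\file)$ in Phase~1, and the right-hand side (times two) as the delay of~$2\size(\file)$ imposed on exactly the requests still pending at that instant, namely those of files to the left of~$\file$ and of files to the right of~$\file$ outside~$\files(\minibatches_1)$. Your explicit partition into delayed and non-delayed requests, and the appeals to Proposition~\ref{prop:offline-has-simple-schedule} and the non-overlap of mini-batches, merely spell out bookkeeping that the paper's terser argument leaves implicit.
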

	\begin{proof}
		We show that if file~$\file$ satisfies the conditions above but does not belong to~$\files(\minibatches_1)$, then $\minibatches_1$ can be improved with the inclusion of the atomic mini-batch containing~$\file$.
		The expression in the left-hand side of  Inequality~\ref{ineq:executeinphase1}  gives the reduction on the waiting times of all requests in~$\setrequests(\file)$ if~$\file$ is executed in Phase 1 as an atomic mini-batch. The expression considers not only the distance between~$\file$ and the beginning of the tape (given by~$\fileleft(\file)$), but also the time spent with the execution of mini-batches starting on the left of~$\file$; note that this distance is actually traversed twice. The expression in the right-hand side accounts for the increase by~$2\size(\file)$ on the the waiting times of all requests pending by the time~$\fileleft(\file)$ is reached for the first time.
		Therefore, if~$\file \notin \files(\minibatches_1)$ and Inequality~\ref{ineq:executeinphase1} holds, $v(\minibatches_1 \cup \{ (\file,\file )  \} )  <  v(\minibatches_1)$  and the result holds.
	\end{proof}
	\begin{corollary}\label{prop:executeinphase2}
		Given an arbitrary set~$\minibatches_1$ and a mini-batch~$(\file,\file)$ in $\minibatches_1$,
		$v(\minibatches_1 \setminus \{ (\file,\file )  \} )  <  v(\minibatches_1)$ if
		\begin{eqnarray}\label{ineq:executeinphase2}
		\nrequests(\file)\left(\fileleft(\file) +  \sum_{ \substack{\minibatch \in \minibatches_1\\ \minibatch < \file }}\size(\minibatch)  \right) 
		<
		\size(\file)\left(    \sum\limits_{  \substack{ \file' \in  \files \\ \file' < \file}   }\nrequests(\file')   + 
		\sum\limits_{  \substack{ \file' \in  \files \setminus \files(\minibatches_1) \\ \file' > \file}   }\nrequests(\file')        \right).		
		\end{eqnarray}
	\end{corollary}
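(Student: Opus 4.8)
The plan is to exploit the exact symmetry between inserting and deleting an atomic mini-batch, thereby reducing the corollary to Proposition~\ref{prop:executeinphase1} applied to the smaller set. First I would set $\minibatches_1' = \minibatches_1 \setminus \{(\file,\file)\}$, so that $\minibatches_1 = \minibatches_1' \cup \{(\file,\file)\}$ and $\file \notin \files(\minibatches_1')$. Deleting $(\file,\file)$ from $\minibatches_1$ is then literally the inverse of the insertion operation analyzed in Proposition~\ref{prop:executeinphase1} with base set $\minibatches_1'$, which gives the identity $v(\minibatches_1 \setminus \{(\file,\file)\}) - v(\minibatches_1) = -\left( v(\minibatches_1' \cup \{(\file,\file)\}) - v(\minibatches_1') \right)$.

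Next I would read off from the proof of Proposition~\ref{prop:executeinphase1} that the signed change caused by inserting $(\file,\file)$ into $\minibatches_1'$ equals the right-hand side of Inequality~\ref{ineq:executeinphase1} minus its left-hand side, with all sums taken over $\minibatches_1'$: the left-hand term records the reduction in waiting time for the requests in $\setrequests(\file)$, while the right-hand term records the additional delay imposed on the requests still pending when $\fileleft(\file)$ is first reached. Negating this quantity, the change produced by deleting $(\file,\file)$ from $\minibatches_1$ equals the left-hand side minus the right-hand side of Inequality~\ref{ineq:executeinphase2}, so the strict inequality assumed in the hypothesis immediately yields $v(\minibatches_1 \setminus \{(\file,\file)\}) < v(\minibatches_1)$.

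The point that needs care, which I expect to be the hard part, is checking that the two sums appearing in Inequality~\ref{ineq:executeinphase2} take the same value whether they are evaluated over $\minibatches_1$ or over $\minibatches_1'$; only then does the hypothesis, stated in terms of $\minibatches_1$, match the quantity produced by Proposition~\ref{prop:executeinphase1}, which is stated in terms of $\minibatches_1'$. For the size sum this holds because the deleted mini-batch has leftmost file $\file$ and is therefore never counted among the mini-batches $\minibatch < \file$. For the request sum over $\files \setminus \files(\minibatches_1)$ restricted to $\file' > \file$, it holds because $(\file,\file)$ is atomic and covers only $\file$, so deleting it can alter membership in $\files(\minibatches_1)$ only at $\file$ itself, which lies outside the range $\file' > \file$. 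The same observation confirms that $\file \notin \files(\minibatches_1')$, so that Proposition~\ref{prop:executeinphase1} genuinely applies to $\minibatches_1'$; with both sums shown to be invariant, the hypotheses of the proposition and of the corollary coincide and the conclusion follows.
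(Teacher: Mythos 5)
Your proof is correct and is essentially the paper's own argument: the paper states this result as a corollary of Proposition~\ref{prop:executeinphase1} with no separate proof, the implicit justification being exactly the insertion/deletion symmetry you spell out, together with the observation that the sums in Inequality~\ref{ineq:executeinphase2} are unaffected by removing the atomic mini-batch $(\file,\file)$ (since it is not counted among the mini-batches $\minibatch < \file$ and can change membership in $\files(\minibatches_1)$ only at $\file$ itself). The one point to flag is that your final claim $\file \notin \files(\minibatches_1')$ tacitly assumes no \emph{other} mini-batch of $\minibatches_1$ covers $\file$; this is the same tacit assumption the paper makes in calling the statement a corollary, and it holds in every context where the result is invoked (namely \FGS, whose sets consist of pairwise-distinct atomic mini-batches).
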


	\subsubsection{Algorithms and approximability}
	
	The minimization of the makespan (i.e., the time by which all requests are serviced) is trivial for \LTSP; an optimal solution for the problem with this optimization criteria consists of moving the head from position~$\lengthtape$ directly to the leftmost bit of the tape, without executing mini-batches (i.e., $\minibatches_1 = \emptyset$), and then returning back to the end of the tape while servicing all requests. Note that the same strategy  also minimizes the distance (or the number of blocks) traversed by the tape. 
	\cite{KarunoNI98} show that this strategy, to which they refer as Simple Schedule Strategy (\SSS), yields a 1.5-approximation algorithm  for the Single-Vehicle Scheduling Problem on a line with release and handling times. The proposition below shows that this strategy may deliver arbitrarily bad plans for the~\LTSP.

	\begin{proposition}[\cite{cardonha2016online}, Proposition 3]\label{prop:simplebad}
		\emph{\SSS} is not a $c$-approximation for \LTSP for any constant~$c$.
	\end{proposition}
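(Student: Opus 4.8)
The plan is to exhibit a one-parameter family of \LTSP instances on which the ratio between the cost of \SSS and the optimum diverges. Since \SSS always drives the head to the leftmost block before servicing anything (it fixes $\minibatches_1=\emptyset$), every request is paid for during Phase~2, i.e.\ only after a full left-sweep; the strategy is therefore punished hardest by a small, heavily requested file sitting far to the right, whose left endpoint is reached late in Phase~2 even though the head begins right next to it. The optimum, by contrast, can service such a file almost for free by executing a single atomic mini-batch at the very start of Phase~1.

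Concretely, for a parameter $k$ I would take three files $\files=(\file_1,\file_2,\file_3)$ with $\size(\file_1)=\size(\file_3)=1$ and $\size(\file_2)=k$, so that $\lengthtape=k+2$, placed as $\fileleft(\file_1)=\fileright(\file_1)=1$, $\fileleft(\file_2)=2$, $\fileright(\file_2)=k+1$, and $\fileleft(\file_3)=\fileright(\file_3)=\lengthtape$. I set $\nrequests(\file_1)=1$, $\nrequests(\file_2)=0$ (a ``filler'' file that merely separates the two requested ones), and $\nrequests(\file_3)=k$. The empty middle file is the device that forces $\fileleft(\file_3)$ to sit far from the left end while keeping $\file_3$ adjacent to the head's starting position.

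Next I would evaluate both schedules. Under \SSS the head reaches position~$1$ at time $\lengthtape-1$ and then sweeps right, so the single request for $\file_1$ waits $\Theta(k)$ and each of the $k$ requests for $\file_3$ waits $\approx 2(\lengthtape-1)=\Theta(k)$, giving $v(\emptyset)=1\cdot(k+1)+k\cdot(2k+2)=\Theta(k^2)$. For an upper bound on the optimum I would use the feasible schedule $\minibatches_1=\{(\file_3,\file_3)\}$, which reads $\file_3$ as an atomic mini-batch immediately (the head starts at $\fileright(\file_3)$, so this costs only $O(1)$ extra traversal), servicing all $k$ requests for $\file_3$ in $O(1)$ time while still servicing $\file_1$ in Phase~2 at time $\Theta(k)$; hence the optimum is at most $v(\{(\file_3,\file_3)\})=\Theta(k)$. (Proposition~\ref{prop:executeinphase1} already certifies that moving $\file_3$ into Phase~1 is an improvement, since its left-hand side $\nrequests(\file_3)\fileleft(\file_3)=k(k+2)$ dwarfs the right-hand side $\size(\file_3)\nrequests(\file_1)=1$.) Combining the two estimates yields a ratio of order $k$, which exceeds any fixed $c$ once $k$ is large enough, proving the claim.

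The only delicate points are bookkeeping rather than conceptual: I must pin down the constant offsets in the timing—how the head's initial position $\lengthtape$ relates to $\fileright(\file_3)$, and the $O(1)$ overhead of the read and return movements of the atomic mini-batch—carefully enough that the leading $\Theta(k^2)$ versus $\Theta(k)$ separation survives, and I must confirm that the comparison schedule is genuinely feasible so that it legitimately upper-bounds the true optimum. Neither affects the asymptotics, so the main effort is simply presenting the construction cleanly.
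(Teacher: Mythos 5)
Your proof is correct and takes essentially the same approach as the paper: a large empty ``filler'' file separating a lightly requested left end from a heavily requested region adjacent to the head's starting position, so that \SSS's mandatory full left sweep costs $\Theta(k^2)$ while the feasible schedule that executes an immediate atomic mini-batch upper-bounds the optimum by $\Theta(k)$, giving an unbounded ratio. The only difference is cosmetic: the paper spreads the right-end demand over $\sqrt{\lengthtape}$ unit-size files with one request each (yielding ratio $\Theta(\sqrt{\lengthtape})$), whereas you concentrate it as $k$ requests on a single file, which is equally legitimate since \LTSP allows $\nrequests(\file)>1$.
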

	\begin{proof}
		Let us assume without loss of generality 
		that~$\lengthtape$ is a perfect square integer. Let~$\files = (\file_1,\file_2,\ldots,\file_{\sqrt{\lengthtape}})$ be such that $\fileleft(\file_1) = \size(\file_1) = 1$; $\fileleft(\file_2) = 2$ and $\size(\file_2) = \lengthtape - \sqrt{\lengthtape}$; and $\fileleft(\file_i) = \lengthtape - \sqrt{\lengthtape} + i - 1$ and $\size(\file_i) = 1$ for each $i$ in $[3,\sqrt{\lengthtape}]$. Finally, let $|\setrequests(\file_i)|$ be $1$ if $i \neq 2$ and 0 otherwise.
		
		In the simple schedule strategy, files
		$\file_1, \file_2, \ldots, \file_{\sqrt{\lengthtape}}$ are executed in the order defined by their indices. Asymptotically, the resulting overall request time yielded by \SSS is
		\begin{eqnarray*}
			O\left(\lengthtape(\sqrt{\lengthtape})\right) + 
			O\left(\sqrt{\lengthtape}(\lengthtape - \sqrt{\lengthtape})\right) + O\left(\frac{\sqrt{\lengthtape}\sqrt{\lengthtape}}{2}\right);
		\end{eqnarray*}
		first, all $O(\sqrt{\lengthtape})$ jobs wait until the tape moves from position~$\lengthtape$ to~$1$, thus resulting in a penalty of $O(\lengthtape\sqrt{\lengthtape})$. After the execution of~$\file_1$, the remaining $\sqrt{\lengthtape}-2$ requests wait for the tape to move forward $O(\lengthtape - \sqrt{\lengthtape})$ positions, resulting in an additional penalty of $O\left(\lengthtape\sqrt{\lengthtape} - \lengthtape\right)$, before being sequentially serviced, with an aggregate penalty given by $\sqrt{\lengthtape} + (\sqrt{\lengthtape}-1)+\ldots+1 = O(\frac{\sqrt{\lengthtape}\sqrt{\lengthtape}}{2})$. Therefore, the overall response time delivered by \SSS is~$O\left(\lengthtape\sqrt{\lengthtape}\right)$. 
		
		Alternatively, by executing mini-batch~$(\fileleft(\file_2),\fileright(\file_{\sqrt{\lengthtape}}  ))$,  
		the waiting time of all requests except one is reduced by a factor of~$\sqrt{\lengthtape}$, thus yielding an overall request time
		asymptotically equal to
		\begin{eqnarray*}
			O\left(\sqrt{\lengthtape}\sqrt{\lengthtape}\right) + O\left(\frac{\sqrt{\lengthtape}\sqrt{\lengthtape}}{2}\right) + O(\lengthtape), 
		\end{eqnarray*}
		which is $O(\lengthtape)$. The ratio between the penalties of both strategies for this family of scenarios converges asymptotically to $O(\sqrt{\lengthtape})$, and since~$\lengthtape$ can be made arbitrary large, we conclude that the simple schedule strategy is not $c$-competitive for \LTSP for any constant~$c$. 
	\end{proof}

	\SSS is rather myopic, in the sense that all requests are left for execution only in Phase 2. An ``opposite'' approach would consist of having one atomic mini-batch in Phase 1 for each file~$\file$ in the tape such that~$\nrequests(\file) > 0$ (with the exception of the file positioned on the leftmost position of the tape, as its execution defines the start of Phase 2). We refer to this policy as the Greedy Strategy (\GS), and below we show that \GS is a 3-approximation algorithm for~\LTSP.
	
	\begin{proposition}[\cite{cardonha2016online}, Proposition 4]\label{prop:greedy3approx}
		\GS is a 3-approximation for~\LTSP and the factor is tight.
	\end{proposition}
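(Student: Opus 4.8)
The plan is to obtain a per-file bound relating the waiting time that \GS assigns to each requested file to the cost any schedule must unavoidably pay for that file, and then sum over all requests. First I would make the two-phase structure of \GS explicit: by Proposition~\ref{prop:offline-has-simple-schedule} and its corollaries, \GS executes during Phase~1 one atomic mini-batch per requested file in decreasing order of leftmost bit, leaving the leftmost requested file (assumed to sit at position~$1$) for Phase~2. Writing the requested files in left-to-right order as $g_1 < g_2 < \cdots < g_k$, I would track the head as it moves left from $\lengthtape$: reaching $\fileleft(g_j)$ for the first time costs the straight-line distance $\lengthtape - \fileleft(g_j)$ plus, for every file $g_i$ strictly to the right of $g_j$, the detour $2(\size(g_i)-1)$ incurred by its atomic mini-batch (a read movement followed by a return movement, which leaves the head where it started). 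This gives the closed form
\begin{eqnarray*}
R^{\mathrm{GS}}(g_j) = \bigl(\lengthtape - \fileleft(g_j)\bigr) + 2\!\!\sum_{i > j}\bigl(\size(g_i)-1\bigr),
\end{eqnarray*}
valid for every $j$ once $g_1$ is placed at position~$1$.

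Next I would establish two facts. The lower bound: since the head starts at $\lengthtape$ and must reach $\fileleft(g_j)$ before $g_j$ can be read, every schedule, and hence the optimum, satisfies $R^{\mathrm{OPT}}(g_j) \geq \lengthtape - \fileleft(g_j)$. The geometric inequality: the files $g_{j+1},\dots,g_k$ are pairwise disjoint intervals all contained in $(\fileleft(g_j),\lengthtape]$, so $\sum_{i>j}\size(g_i) \leq \lengthtape - \fileleft(g_j)$, and a fortiori $\sum_{i>j}(\size(g_i)-1) \leq \lengthtape - \fileleft(g_j)$. Substituting the latter into the closed form yields the decisive per-file inequality $R^{\mathrm{GS}}(g_j) \leq 3\bigl(\lengthtape - \fileleft(g_j)\bigr) \leq 3\,R^{\mathrm{OPT}}(g_j)$. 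Multiplying by the weight $\nrequests(g_j)$ and summing over $j$ then gives $v(\mathrm{GS}) \leq 3\,v(\mathrm{OPT})$, since each file contributes $\nrequests(g_j)$ requests of identical response time.

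For tightness I would exhibit a family on which the ratio tends to~$3$: place a size-$1$ file $g_1$ at position~$1$ carrying a large weight $\nrequests(g_1)=W$, and a single large file $g_2$ occupying positions $2,\dots,\lengthtape$ with one request. Here \GS detours into $g_2$ before reaching $g_1$, so $R^{\mathrm{GS}}(g_1) = 3\lengthtape - O(1)$, whereas the optimum serves the heavy file $g_1$ at the earliest possible moment $\lengthtape-1$ (reaching position~$1$ directly, exactly as \SSS does, any earlier service of $g_2$ costing $W\cdot 2(\lengthtape-2)$ against a saving of at most $\lengthtape$). Driving $W\to\infty$ kills the negligible $g_2$ term and driving $\lengthtape\to\infty$ sends $(3\lengthtape-O(1))/(\lengthtape-1)$ to~$3$.

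The routine parts are the head-tracking bookkeeping and the two inequalities; the step that carries the argument, and where I would be most careful, is the observation that the total detour length into files lying to the right of $g_j$ never exceeds the direct distance $\lengthtape - \fileleft(g_j)$, which is exactly what converts the additive ``distance $+$ twice-detour'' decomposition into the multiplicative factor $1+2=3$ \emph{uniformly} across all requests and makes the weighted summation trivial. The only modelling subtlety to pin down is the treatment of the leftmost requested file served in Phase~2, which the standing assumption that it lies at position~$1$ resolves cleanly.
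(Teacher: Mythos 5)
Your proposal is correct and follows essentially the same route as the paper: a per-request lower bound of $\lengthtape - \fileleft(\request)$ (the distance to the end of the tape), the observation that \GS makes each requested file to the right cost at most three traversals so that $R^{\mathrm{GS}} \leq 3(\lengthtape - \fileleft(\request))$, weighted summation over requests, and tightness via the same two-file family (heavy small file at position~$1$, one long file with a single request, with \SSS bounding the optimum). Your ``geometric inequality'' phrasing is just a rearrangement of the paper's term-by-term comparison of the two sums, and letting the weight grow independently of $\lengthtape$ rather than fixing it to $\lengthtape-1$ is an immaterial variation.
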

	\begin{proof}
		First, note that a lower bound for the response time of any job~$\request$ in $\setrequests$ is given by
		\begin{eqnarray*}\label{lbprop}
			\lengthtape - \fileleft(\request) = \sum_{\substack{\file' \in \files\\ \fileleft(\file') > \fileleft(\request) \text{ and } \nrequests(\file') > 0 }}\size(\file') + \sum_{\substack{\file' \in \files \\ \fileleft(\file') > \fileleft(\request) \text{ and } \nrequests(\file') = 0 }}\size(\file');
		\end{eqnarray*}
		the left-hand side of the equality indicates that this  bound is equal to the distance between~$\fileleft(\request)$ and the end of the tape.
		This region of the tape can be decomposed in two parts, as indicated in the right-hand side of the equation; one containing files associated with requests,	which is represented by the first term, and the other containing files without pending requests.

		In the greedy strategy, all requests associated with files located to the right of~$\file(\request)$ are serviced in atomic mini-batches before~$\fileleft(\request)$ is reached; consequently, the response time of~$\request$ is given by
		\begin{eqnarray}\label{exactgreedy}
		\sum_{\substack{\file' \in \files \\ \fileleft(\file') > \fileleft(\request) \text{ and } \nrequests(\file') = 0 }}\size(\file') +  
		\sum\limits_{\substack{\file' \in \files \\ \fileleft(\file') > \fileleft(\request) \text{ and } \nrequests(\file') > 0  }}3\size(\file'),
		\end{eqnarray}
		since the head of the tape visits each of these files exactly three times before moving to~$\fileleft(\request)$. These facts hold for every~$\request \in \setrequests$, so it follows that \GS is a 3-approximation for~\LTSP. 
		
		Let us assume now that $|\files| = 2$, with $\fileleft(\file_1) = \size(\file_1) = 1$, $\fileleft(\file_2) = 2$, $\size(\file_2) = \lengthtape-1$, $n(\file_1) = \lengthtape-1$, and $n(\file_2) = 1$, that is, the tape has two files, with $\lengthtape-1$ requests associated with the first and 1 with the second. The overall request time delivered by \GS for this family of scenarios is  $\lengthtape(\lengthtape-1) + 2(\lengthtape-1)^2 + m - 1 = 3\lengthtape^2 - 4\lengthtape - 3$, whereas the plan delivered by \SSS yields an overall response time of $\lengthtape^2 + 1$; asymptotically, the ratio between both solutions converge to~$3$, thus showing that this approximation factor is tight. 
	\end{proof}
	
	Note that the family of instances used in the proof of Proposition~\ref{prop:greedy3approx} also shows that GS does not necessarily yield better schedules than \SSS.

	The Filtered Greedy Strategy (\FGS), presented in Algorithm~\ref{alg:filteredselectivegreedy}, extends \GS by applying Corollary~\ref{prop:executeinphase2} successively to the solution produced by \GS as a post-processing operation in order to improve the solution. Namely, \FGS proceeds by checking whether Inequality~\ref{ineq:executeinphase2} holds for some file~$\file$ in~$\files(\minibatches_1)$; if so, the associated atomic mini-batch is removed and the procedure is repeated until no more mini-batches can be removed by this criteria. The following lemma shows that the order with which non-atomic mini-batches are selected for remotion based on Corollary~\ref{prop:executeinphase2} is irrelevant. 
	
	\alglanguage{pseudocode}
	\begin{algorithm}[!ht]
		\caption{Filtered Greedy Strategy (\FGS)}
		\label{alg:filteredselectivegreedy}
		\begin{algorithmic}
			\Procedure{\FGS}{$\files,\setrequests,h,\minibatches_1$}
			\State $\minibatches_1 = \GS(\files,\setrequests,h)$
			\For{$k \in \{1,\ldots,|\files|\}$}
			\For{$\file \in \files(\minibatches_1)$}
			\LineComment{\emph{Check condition of Corollary~\ref{prop:executeinphase2}}}
			\If{
				$\nrequests(\file)\left(\fileleft(\file) +  \sum\limits_{ \substack{\minibatch \in \minibatches_1\\ \minibatch < \file }}\size(\minibatch)  \right) 
				<
				\size(\file)\left(    \sum\limits_{  \substack{ \file' \in  \files \\ \file' < \file}   }\nrequests(\file')   + 
				\sum\limits_{  \substack{ \file' \in  \files \setminus \files(\minibatches_1) \\ \file' > \file}   }\nrequests(\file')        \right)$
			}
			\State $\minibatches_1 = \minibatches_1 \setminus \{(\file,\file)\}$
			\EndIf
			\EndFor
			\EndFor	
			\State return $\minibatches_1$
			\EndProcedure
		\end{algorithmic}
	\end{algorithm}
	\begin{lemma}\label{lemma:fgs_stable}
		If file~$\file$ satisfies Inequality~\ref{ineq:executeinphase2} in the $i$-th iteration of \FGS, then
		$\file$ also satisfies Inequality~\ref{ineq:executeinphase2} in the $j$-th iteration for every $j \geq i$.
	\end{lemma}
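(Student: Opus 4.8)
The plan is to show that the defining inequality~\ref{ineq:executeinphase2} for a fixed file~$\file$ is \emph{monotone} with respect to deletions from~$\minibatches_1$: removing any atomic mini-batch can only decrease its left-hand side and can only increase its right-hand side. Since \GS produces exclusively atomic mini-batches and \FGS only ever deletes such mini-batches, every mini-batch manipulated during the run is atomic, and a single iteration changes~$\minibatches_1$ by deleting one atomic mini-batch $(\file^*,\file^*)$; hence it suffices to prove the claim for one deletion and iterate.

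First I would isolate the dependence of~\ref{ineq:executeinphase2} on the current set~$\minibatches_1$. Writing the two sides as $L(\minibatches_1)$ and $R(\minibatches_1)$, the left-hand side depends on~$\minibatches_1$ only through the term $\sum_{\minibatch \in \minibatches_1,\, \minibatch < \file}\size(\minibatch)$, whereas on the right-hand side the term $\sum_{\file' < \file}\nrequests(\file')$ is fixed and only $\sum_{\file' \in \files \setminus \files(\minibatches_1),\, \file' > \file}\nrequests(\file')$ varies; the prefactors $\nrequests(\file)$, $\fileleft(\file)$, and $\size(\file)$ are constants, and (crucially) $\file$ is never itself the deleted mini-batch while we track it, so no $\file$-indexed constant is disturbed.

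Next I would analyze one deletion of an atomic mini-batch $(\file^*,\file^*)$ with $\file^* \neq \file$, splitting into two cases according to the side on which $\file^*$ lies. If $\file^* < \file$, its size leaves the left-hand sum, so $L$ decreases by $\nrequests(\file)\,\size(\file^*) \geq 0$ while $R$ is unchanged (as $\file^*$ is not counted among the $\file' > \file$). If $\file^* > \file$, then $\file^*$ enters $\files \setminus \files(\minibatches_1)$ and is counted in the right-hand sum, so $R$ increases by $\size(\file)\,\nrequests(\file^*) \geq 0$ while $L$ is unchanged. In either case $L$ does not increase and $R$ does not decrease.

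Finally I would chain these single-step bounds. The iterations between~$i$ and~$j$ delete a set of atomic mini-batches, so applying the previous step repeatedly yields $L(\minibatches_1^{(j)}) \leq L(\minibatches_1^{(i)})$ and $R(\minibatches_1^{(i)}) \leq R(\minibatches_1^{(j)})$. Combining with the hypothesis $L(\minibatches_1^{(i)}) < R(\minibatches_1^{(i)})$ gives $L(\minibatches_1^{(j)}) \leq L(\minibatches_1^{(i)}) < R(\minibatches_1^{(i)}) \leq R(\minibatches_1^{(j)})$, i.e.\ inequality~\ref{ineq:executeinphase2} still holds at iteration~$j$. The argument is essentially a monotonicity observation; the only point requiring care is the clean bookkeeping of the two cases $\file^* < \file$ and $\file^* > \file$ and the verification that the constant terms indexed by~$\file$ are genuinely unaffected by deletions of \emph{other} files, which is the main—though modest—obstacle.
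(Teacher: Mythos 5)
Your proof is correct and takes essentially the same route as the paper's: both arguments establish the monotonicity of Inequality~\ref{ineq:executeinphase2} under deletions, namely that removing an atomic mini-batch to the left of~$\file$ can only decrease the left-hand side while removing one to the right of~$\file$ can only increase the right-hand side, and then chain these facts over the iterations between~$i$ and~$j$. Your write-up is merely more explicit about the exact increments, and the one case you set aside (deletion of $(\file,\file)$ itself) is harmless anyway, since neither side of the inequality for~$\file$ involves $\file$'s own membership in~$\files(\minibatches_1)$.
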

	\begin{proof}  Let~$w^l_{\file}(k)$ and~$w^r_{\file}(k)$ denote the value of the left-hand side and of the right-hand expressions of Inequality~\ref{ineq:executeinphase2} for file~$\file$ on the $k$-th iteration of \FGS, respectively. Suppose that $w^l_{\file}(k) < w^r_{\file}(k)$, and let $\minibatches_{(i,j)} = \{ \minibatch_i, \minibatch_{i+1}, \ldots, \minibatch_j \}$ be the set of (atomic) mini-batches removed between iterations~$i$ and~$j$. The remotion of each mini-batch in $\minibatches_{(i,j)}$ located to the left of~$\file$ decreases the penalty on the left-hand side of Inequality~\ref{ineq:executeinphase2} (i.e., there are less mini-batches to be executed, so the tape will reach $\fileleft(\file)$ faster), so we have $w^l_{\file}(j) \leq w^l_{\file}(i)$. Analogously, by excluding mini-batches located to the right of~$\file$, we increase the penalty on the right-hand side of Inequality~\ref{ineq:executeinphase2} (i.e., there are more requests that will only be executed in Phase 2), so  $w^r_{\file}(i) \leq w^r_{\file}(j)$ and the result holds. 
	\end{proof}

	\begin{proposition}\label{prop:filteredgreedy3approx}
		\FGS is a 3-approximation for~\LTSP.
	\end{proposition}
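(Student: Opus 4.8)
The plan is to leverage the fact that \FGS is purely subtractive: by construction (line~2 of Algorithm~\ref{alg:filteredselectivegreedy}) it initializes $\minibatches_1$ to the schedule returned by \GS and thereafter only ever deletes atomic mini-batches, never adding any. Since Proposition~\ref{prop:greedy3approx} already establishes that \GS is a 3-approximation, it suffices to show that the deletions performed by \FGS never increase the objective value. Transitivity of ``$\leq$'' then yields $v(\minibatches_1^{\FGS}) \leq v(\minibatches_1^{\GS}) \leq 3\cdot\mathrm{OPT}$, which is exactly the claim.

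To make this precise I would fix $\minibatches_1^{\GS} = \GS(\files,\setrequests,h)$ and record the sequence $\minibatches_1^{(0)} = \minibatches_1^{\GS}, \minibatches_1^{(1)}, \ldots, \minibatches_1^{(T)} = \minibatches_1^{\FGS}$ of sets obtained after each successful execution of the \texttt{if}-branch. Each transition $\minibatches_1^{(k)} \to \minibatches_1^{(k+1)}$ removes a single atomic mini-batch $(\file,\file)$ for a file $\file$ which, at that moment, makes the guard of the \texttt{if}-statement true. The crucial observation is that this guard is, verbatim, the hypothesis of Corollary~\ref{prop:executeinphase2} (Inequality~\ref{ineq:executeinphase2}) evaluated with respect to the current set $\minibatches_1^{(k)}$. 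Corollary~\ref{prop:executeinphase2} therefore applies and gives $v(\minibatches_1^{(k+1)}) = v\bigl(\minibatches_1^{(k)} \setminus \{(\file,\file)\}\bigr) < v(\minibatches_1^{(k)})$. Hence $v$ is strictly decreasing along the whole run, and in particular $v(\minibatches_1^{\FGS}) \leq v(\minibatches_1^{\GS})$; combining this with Proposition~\ref{prop:greedy3approx} completes the argument.

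The only delicate point is to confirm that the sums appearing in Inequality~\ref{ineq:executeinphase2} are always read off the \emph{current} set $\minibatches_1$ — that is, that deletions made earlier in the same pass are reflected when the guard is re-tested — so that Corollary~\ref{prop:executeinphase2} legitimately applies at each step rather than only to the initial \GS set. This is guaranteed by Lemma~\ref{lemma:fgs_stable}: once a file becomes eligible for removal it remains eligible, so the final output is independent of the order in which eligible mini-batches are deleted and every recorded removal is a valid instance of Corollary~\ref{prop:executeinphase2}. I do not expect a genuine obstacle here, since the substantive work has already been front-loaded into Proposition~\ref{prop:greedy3approx} and Corollary~\ref{prop:executeinphase2}; the remaining content is just monotonicity of $v$ under the permitted deletions together with transitivity of the approximation bound.
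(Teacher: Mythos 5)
Your proposal is correct and follows essentially the same route as the paper's own proof: \FGS only removes mini-batches whose removal, by the guard of Algorithm~\ref{alg:filteredselectivegreedy} (i.e., Corollary~\ref{prop:executeinphase2} evaluated on the current set), strictly decreases the objective, so $v(\minibatches_1^{\FGS}) \leq v(\minibatches_1^{\GS})$ and the bound follows from Proposition~\ref{prop:greedy3approx}. The only minor quibble is that your appeal to Lemma~\ref{lemma:fgs_stable} is unnecessary here --- the validity of each removal follows directly from the algorithm re-evaluating the guard on the current set, while that lemma only concerns order-independence of the output --- but this does not affect correctness.
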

	\begin{proof}
		Let~$\minibatches_1$ be the set of mini-batches obtained in the first step of Algorithm~\ref{alg:filteredselectivegreedy} (i.e., by \GS) and let~$\minibatches_1^*$ be the set returned by \FGS.  By construction, one mini-batch is removed per iteration of \FGS. Each individual operation reduces the current overall response time, so we have that \FGS dominates \GS, in the sense that~$v(\minibatches_1^*) \leq v(\minibatches_1)$ for every instance of the problem. From Proposition~\ref{prop:greedy3approx}, it follows that \FGS is a 3-approximation for~\LTSP.
	\end{proof}

	The previous algorithms employ only atomic mini-batches, so we investigate now the natural extension of this approach, in which non-atomic mini-batches are also considered. First, we introduce the concept of \textit{opportunity cost}~$\Delta(\minibatches_1,\minibatch)$, which reflects the impact on the overall waiting times caused by the inclusion of mini-batch~$\minibatch$ in~$\minibatches_1$, i.e.,
	$\Delta(\minibatches_1,\minibatch) = v(\minibatches_1 \cup \{\minibatch\}) - v(\minibatches_1)$. The opportunity cost~$\Delta(\minibatches_1,\minibatch)$ is evaluated at the moment when $\fileleft(\minibatch)$ is reached for the first time and is given by
	
	\vspace{-5pt}
	\begin{scriptsize}
		\begin{eqnarray}\label{def:impact}
		\Delta(\minibatches_1,\minibatch) &=& 
		\size(\minibatch)\left(    
		\sum\limits_{  \substack{ \file \in  \files   \\ \file < \minibatch}   }\nrequests(\file)   + 	
		\sum\limits_{  \substack{ \file \in  \files \setminus \files(\minibatches_1 ) \\ \file > \minibatch}   }\nrequests(\file)        \right) +		
		\sum_{\file \in   \files(\minibatch) \setminus \files(\minibatches_1)  }  \left(\fileleft(\file) - \fileleft(\minibatch) \right)\nrequests(\file) 
		- \sum_{\file \in   \files(\minibatch) \setminus \files(\minibatches_1)  }  \nrequests(\file)   
		\left(\fileleft(\file) +  \sum_{ \substack{\minibatch' \in \minibatches_1\\ \minibatch' < \minibatch   }}\size(\minibatch')  \right) \nonumber  \\
		&=& \size(\minibatch)\left(    
		\sum\limits_{  \substack{ \file \in  \files   \\ \file < \minibatch}   }\nrequests(\file)   + 	
		\sum\limits_{  \substack{ \file \in  \files \setminus \files(\minibatches_1 ) \\ \file > \minibatch}   }\nrequests(\file)        \right) -		
		\sum\limits_{\file \in   \files(\minibatch) \setminus \files(\minibatches_1)  }   \nrequests(\file) \left(    \fileleft(\minibatch) +    
		\sum\limits_{ \substack{\minibatch' \in \minibatches_1\\ \minibatch' < \file   }}\size(\minibatch')  \right).
		\end{eqnarray}
	\end{scriptsize}
	
	In short, by including~$\minibatch$, requests associated with files in $\files(\minibatch) \setminus \files(\minibatches_1)$ will have their waiting times reduced by the amount of time the tape will spend to return to position~$\fileleft(\minibatch)$ in Phase 2, whereas all the others will wait $\size(\minibatch)$ additional time-steps due to the execution of~$\minibatch$. 
	We leverage the concept of opportunity cost to extend \FGS and derive the Non-atomic Filtered Greedy Strategy (\NFGS), a strategy that improves the quality of solutions created by \FGS based on the inclusion of non-atomic mini-batches; the procedure is formally presented in Algorithm~\ref{alg:nonatomicfilteredselectivegreedy}.

	\alglanguage{pseudocode}
	\begin{algorithm}[!ht]
		\caption{Non-atomic Filtered Greedy Strategy (\NFGS)}
		\label{alg:nonatomicfilteredselectivegreedy}
		\begin{algorithmic}
			\Procedure{\NFGS}{$\files,\setrequests,h,\minibatches_1$}
			\State $\minibatches_1 =  \FGS(\files,\setrequests,h,\minibatches_1)$
			\For{$\file \in \files$}
			\State $\minibatches_1' = \minibatches_1 \setminus \bigcup\limits_{\file' \in \files}(\file,\file') $
			\State $\file^* = \argmin\limits_{\file' \in \files, \file' > \file} \Delta(\minibatches_1', (\file,\file'))$
			\If{ $\Delta(\minibatches_1', (\file,\file^*)  ) < 0$}
			\State $\minibatches_1 = \minibatches_1' \cup \{(\file,\file^*)\}$
			\EndIf
			%
			\EndFor
			\State return $\minibatches_1$
			\EndProcedure
		\end{algorithmic}
	\end{algorithm}

	\NFGS starts with the schedule produced by \FGS and iteratively tries to improve it by incorporating non-atomic mini-batches based on opportunity costs. Namely, in each iteration, the algorithm creates~$\minibatches_1'$, which is composed by all mini-batches in~$\minibatches_1$ that do not start from some given file~$\file$ (note that, from Corollary~\ref{one-minibatches-per-file}, there exists at most one such mini-batch); files are inspected in the main loop of~\FGS according to the order defined by~$\files$ (leftmost file is checked first). Given~$\minibatches_1'$, \FGS checks whether the smallest opportunity cost~$\Delta(\minibatches_1',(\file,\file'))$ among all files~$\file' > \file$ is negative; if so, $\minibatches_1$ is substituted for $\minibatches_1' \cup \{(\file,\file')\}$, which has a smaller objective value by construction. 
	An efficient implementation of \NFGS runs in time~$O(|\files|^3)$; namely, each of the $O(|\files|^2)$ potential mini-batches is tested once, and the update of data structures supporting fast evaluation of opportunity costs takes time~$O(|\files|)$. For practical applications, this running time may be prohibitively large, so we also considered in our experiments \LogNFGS, the version of \NFGS in which we only consider mini-batches~$(\file,\file')$ such that the relative positions of~$\file$ and~$\file'$ in the tape do not differ by more than $\log(|\files|)$.

	Finally, as~$\minibatches_1$ is substituted only for solutions with better objective values, 
	the same arguments employed in the proof of Proposition~\ref{prop:filteredgreedy3approx} can be used to show the following result:
	\begin{corollary}\label{prop:nonatomicfilteredgreedy3approx}
		\NFGS and \LogNFGS are 3-approximations for~\LTSP.
	\end{corollary}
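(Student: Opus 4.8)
The plan is to establish the domination chain $v(\minibatches_1^{**}) \le v(\minibatches_1^{*}) \le 3\,\mathrm{OPT}$, where $\minibatches_1^{*}$ and $\minibatches_1^{**}$ denote the sets returned by \FGS and \NFGS respectively, and then to conclude exactly as in Proposition~\ref{prop:filteredgreedy3approx}. Since \NFGS is initialized with the \FGS solution (its first line calls \FGS), it suffices to prove that the objective value $v(\minibatches_1)$ is non-increasing across the iterations of the main loop of \NFGS; the approximation guarantee then transfers immediately, just as \FGS inherited it from \GS in Proposition~\ref{prop:filteredgreedy3approx} (which in turn rests on Proposition~\ref{prop:greedy3approx}). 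The bound $v(\minibatches_1^{*}) \le 3\,\mathrm{OPT}$ is already available and needs no reproof, so the entire argument reduces to per-iteration monotonicity.

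First I would fix a single iteration processing file~$\file$, let $\minibatches_1$ be the current set at entry, and let $\minibatches_1' = \minibatches_1 \setminus \bigcup_{\file'}(\file,\file')$ be the set obtained after deleting the (by Corollary~\ref{one-minibatches-per-file}, unique) mini-batch starting at~$\file$, if any. By the definition of the opportunity cost in~\eqref{def:impact} we have $v(\minibatches_1' \cup \{(\file,\file^*)\}) = v(\minibatches_1') + \Delta(\minibatches_1',(\file,\file^*))$, and the substitution is carried out only when $\Delta(\minibatches_1',(\file,\file^*)) < 0$. The task therefore reduces to showing $v(\minibatches_1' \cup \{(\file,\file^*)\}) \le v(\minibatches_1)$, i.e., that the replacement never increases the objective relative to the set present at the \emph{start} of the iteration.

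Next I would split on the mini-batch (if any) starting at~$\file$ in $\minibatches_1$. If no mini-batch starts at~$\file$, then $\minibatches_1' = \minibatches_1$ and the strict test $\Delta(\minibatches_1',(\file,\file^*)) < 0$ yields $v(\minibatches_1' \cup \{(\file,\file^*)\}) < v(\minibatches_1)$ at once. If a non-atomic mini-batch $(\file,\file'')$ with $\file'' > \file$ was present, then $v(\minibatches_1) = v(\minibatches_1') + \Delta(\minibatches_1',(\file,\file''))$; since $\file^*$ minimizes $\Delta(\minibatches_1',(\file,\cdot))$ over all files to the right of~$\file$ and $(\file,\file'')$ is among the candidates, we get $\Delta(\minibatches_1',(\file,\file^*)) \le \Delta(\minibatches_1',(\file,\file''))$ and hence $v(\minibatches_1' \cup \{(\file,\file^*)\}) \le v(\minibatches_1)$.

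The main obstacle is the remaining sub-case, in which $\minibatches_1$ contains an \emph{atomic} mini-batch $(\file,\file)$: this candidate is excluded by the $\file' > \file$ range of the $\argmin$, so the preceding argument does not directly give $\Delta(\minibatches_1',(\file,\file^*)) \le \Delta(\minibatches_1',(\file,\file))$, and because earlier iterations may have enlarged the mini-batches lying to the left of~$\file$ (thereby changing the $\sum_{\minibatch'<\file}\size(\minibatch')$ term of the opportunity cost) one cannot simply reuse the optimality certified at the termination of \FGS. I would resolve this by taking the $\argmin$ over $\file' \ge \file$, i.e., by treating the incumbent atomic mini-batch $(\file,\file)$ itself as one of the admissible candidates; then $\Delta(\minibatches_1',(\file,\file^*)) \le \Delta(\minibatches_1',(\file,\file))$ holds by definition and $v(\minibatches_1' \cup \{(\file,\file^*)\}) \le v(\minibatches_1)$ in every case. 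This does not weaken the procedure, since retaining the atomic mini-batch is then merely the choice $\file^* = \file$. With per-iteration monotonicity secured we obtain $v(\minibatches_1^{**}) \le v(\minibatches_1^{*})$, and the 3-approximation for \NFGS follows from Proposition~\ref{prop:filteredgreedy3approx}. Finally, \LogNFGS inspects only a subset of the mini-batches considered by \NFGS and likewise substitutes solely when the selected opportunity cost is negative, so the identical monotonicity and domination argument applies verbatim and delivers the same guarantee.
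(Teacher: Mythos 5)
Your proposal takes the same route as the paper: the paper's entire proof is the remark that $\minibatches_1$ is only ever replaced by a set of smaller objective value, so that \NFGS and \LogNFGS dominate \FGS and inherit the factor of $3$ from Proposition~\ref{prop:filteredgreedy3approx}. What you add is an actual verification of the per-iteration monotonicity that the paper asserts in a single line, and the obstacle you isolate in your third case is genuine, not a technicality. Because the $\argmin$ in Algorithm~\ref{alg:nonatomicfilteredselectivegreedy} ranges over $\file' > \file$ strictly, a displaced \emph{atomic} incumbent $(\file,\file)$ is not among the candidates, and nothing forces $\Delta(\minibatches_1',(\file,\file^*)) \le \Delta(\minibatches_1',(\file,\file))$. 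Indeed the literal pseudocode can increase the objective: take three unit-size files with $\nrequests(\file_1)=1$ and $\nrequests(\file_2)=\nrequests(\file_3)=100$. Then \FGS retains both atomic mini-batches $(\file_2,\file_2)$ and $(\file_3,\file_3)$, and in the iteration for $\file_2$ the only admissible candidate $(\file_2,\file_3)$ has $\Delta(\minibatches_1',(\file_2,\file_3)) = 2-200 = -198 < 0$ while the incumbent has $\Delta(\minibatches_1',(\file_2,\file_2)) = 1-200 = -199$; the swap is executed and the head now traverses $\file_3$ a third time for no benefit, delaying the request of $\file_1$. So the paper's phrase ``substituted only for solutions with better objective values'' is accurate relative to $\minibatches_1'$ but not relative to the set held at the start of the iteration, which is what the domination chain actually requires.

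Your repair --- letting the $\argmin$ range over $\file' \ge \file$, so that retaining the atomic mini-batch is itself a candidate --- is the natural fix, restores $v(\minibatches_1^{**}) \le v(\minibatches_1^{*})$ by definition, and plainly matches the intent of the paper's prose (``which has a smaller objective value by construction''). You should, however, state explicitly that you are proving the corollary for this amended procedure rather than for the pseudocode as printed: for the printed version the dominance over \FGS is simply false in general, and a factor-$3$ guarantee, if it survives at all, would require a different argument than the one you (and the paper) give. Two smaller remarks. First, your second case (a non-atomic incumbent starting at $\file$) is vacuous: \NFGS makes a single left-to-right pass starting from the all-atomic \FGS solution, and the iteration for $\file$ is the only one that can touch a mini-batch starting at $\file$, so any incumbent there is atomic; keeping the case does no harm. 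Second, your closing remark on \LogNFGS is correct under the same amendment, since restricting the candidate set only removes possible swaps, each of which remains non-increasing.
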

	
	We remark that the inclusion of mini-batches appears to be analytically more challenging than their remotion. Namely, whereas Lemma~\ref{lemma:fgs_stable} provides ``stability'' to \FGS, in the sense that the order with which decision are made is irrelevant and the output of the strategy will always deliver the same objective value, 
	order does matter for \NFGS. The identification of an optimal remotion strategy for \NFGS seems to be as hard as solving~\LTSP exactly, a challenge that we briefly discuss below. Before, we investigate special cases of~\LTSP which have practical relevance and also provide additional structure that makes the problem solvable in polynomial time.

	\subsubsection*{Tractable scenarios}

	In the theorem below, we show that~\LTSP  can be efficiently solved if all files have the same size and  exactly one associated request each. 
	
	\begin{theorem}\label{thm:emptyoptimal}
		If  $\size(\file) = k$ for some given $k \in \mathbb{N}$ and $\nrequests(\file) = 1$ for every $\file$ in $\files$, $\minibatches_1 = \emptyset$ is optimal.
	\end{theorem}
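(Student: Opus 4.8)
The plan is to show $v(\minibatches_1)\ge v(\emptyset)$ for every admissible $\minibatches_1$ by first normalizing its shape, then evaluating the objective-value gap in closed form, and finally reducing it to a short packing inequality.

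First I would restrict attention to well-structured sets. By Proposition~\ref{prop:offline-has-simple-schedule} the mini-batches of an optimal $\minibatches_1$ may be assumed to be executed in decreasing order of their leftmost bits, and by Proposition~\ref{minibatches-do-not-intersect} together with Corollary~\ref{one-minibatches-per-file} such a set may be taken \emph{pairwise disjoint}: a mini-batch contained in (or partially overlapping) another belongs to the corresponding~$\conflictminibatch$, so neither nesting nor partial overlap survives. Thus it suffices to bound $v(\minibatches_1)$ for a disjoint family of mini-batches with left endpoints $\file_{a_1},\dots,\file_{a_t}$ and right endpoints $\file_{c_1},\dots,\file_{c_t}$, ordered left to right (with $n=|\files|$); write $\sigma_s=c_s-a_s+1$ for the number of files in the $s$-th mini-batch, $\Sigma=\sum_s\sigma_s$, and $r_s=n-c_s$ for the number of files strictly to its right.

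Next I would compute response times using the uniform hypotheses $\size(\file)=k$ and $\nrequests(\file)=1$. Each mini-batch adds a round trip of $2\sigma_s k$ time-steps, so relative to the empty schedule every file serviced in Phase~2 is delayed by exactly $D=\sum_s 2\sigma_s k$, whereas a file inside the $s$-th mini-batch is read during its read movement and so sees its response time change by $S_{>s}-2(a_s-1)k$, where $S_{>s}=\sum_{s'>s}2\sigma_{s'}k$ is the overhead of the mini-batches executed earlier (those to its right); this change is independent of the file's position within the mini-batch, so it multiplies cleanly by $\sigma_s$. Summing the $n-\Sigma$ Phase-2 contributions and the $\sigma_s$ contributions of each mini-batch, and substituting $a_s-1=n-r_s-\sigma_s$, the bookkeeping collapses to
\begin{equation*}
v(\minibatches_1)-v(\emptyset)=2k\left(\sum_{s}\sigma_s r_s-\sum_{s<s'}\sigma_s\sigma_{s'}\right).
\end{equation*}

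Finally I would close with a packing inequality. Because the mini-batches are disjoint and ordered, the files lying strictly to the right of the $s$-th mini-batch include all files of every later mini-batch, giving $r_s\ge\sum_{s'>s}\sigma_{s'}$ and hence $\sum_s\sigma_s r_s\ge\sum_s\sigma_s\sum_{s'>s}\sigma_{s'}=\sum_{s<s'}\sigma_s\sigma_{s'}$. This makes the gap nonnegative, so $\minibatches_1=\emptyset$ is optimal (with ties only when the mini-batches are packed flush against the right end). The step I expect to be the main obstacle is the response-time computation of the third paragraph: the overheads of distinct mini-batches compound, which defeats any ``remove one mini-batch at a time'' induction---removing a mini-batch can actually increase the cost while others remain---so one is forced to compare a whole disjoint family against $\emptyset$ simultaneously and to spot the exact packing bound $r_s\ge\sum_{s'>s}\sigma_{s'}$, rather than the weaker $a_s\le n-\sigma_s+1$, to drive the difference to zero.
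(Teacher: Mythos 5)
Your proof is correct, and it reaches the theorem by a genuinely different route than the paper. The paper argues by a chain of local exchanges: Lemma~\ref{lemma:onlyatomic} first reduces to solutions made of \emph{atomic} mini-batches, Lemma~\ref{lemma:noholes} rules out holes in optimal solutions, and Lemma~\ref{lemma:shrinkschedule} shows the leftmost mini-batch of an optimal solution can be deleted at zero change in objective value; iterating that deletion shrinks any optimal solution to $\emptyset$. You skip the atomic reduction entirely: after normalizing to a pairwise disjoint, ordered family (legitimate, via Propositions~\ref{prop:offline-has-simple-schedule} and~\ref{minibatches-do-not-intersect} and Corollary~\ref{one-minibatches-per-file}), you compare the whole family against $\emptyset$ in one shot through the closed-form gap $v(\minibatches_1)-v(\emptyset)=2k\bigl(\sum_s\sigma_s r_s-\sum_{s<s'}\sigma_s\sigma_{s'}\bigr)$ (in your notation), which I verified, and close with the packing bound $r_s\ge\sum_{s'>s}\sigma_{s'}$. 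What your route buys: an exact quantification of the sub-optimality of any disjoint schedule and a characterization of the ties (precisely the flush-right packings), which the paper obtains only implicitly. What the paper's route buys: modular lemmas that are reused later --- Lemma~\ref{lemma:onlyatomic} and the hole argument reappear in the proof of Theorem~\ref{thm:forcedholesoptimal}, where your closed form would have to be rederived for weights $\nrequests(\file)\in\{0,1\}$. One caveat on your closing remark: the claim that a ``remove one mini-batch at a time'' induction is defeated is overstated. Removing an \emph{arbitrary} mini-batch can indeed increase cost (e.g.\ removing the rightmost of two adjacent atomic batches), but the paper's induction always removes the \emph{leftmost} batch, and together with the no-holes property that particular removal order does succeed; the obstruction you identify is real only for an unrestricted removal order.
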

	\begin{proof}
		
		The result follows from a few  properties of  this special case of~\LTSP. First, we show that we may restrict our attention to solutions consisting solely of  atomic mini-batches.
		
		\begin{lemma}\label{lemma:onlyatomic}
			For every~$\minibatches_1 \subseteq \minibatches$, there is a $\minibatches_1' \subseteq \minibatches$ such that $\objectivevalue(\minibatches_1') \leq \objectivevalue(\minibatches_1)$ and all mini-batches in~$\minibatches_1'$ are atomic.
		\end{lemma}
		\begin{proof}
			Let $\minibatches_1$ be an optimal solution containing a non-atomic mini-batch~$\minibatch = (\file,\file')$. 
			From Corollaries~\ref{minibatches-start-on-first-pass} and~\ref{minibatches-finish-on-untouched-file}, $\file$ and~$\file'$  will be traversed for the first time during the execution of~$\minibatch$. Let~$x$ be the difference between the time-steps in which positions~$\fileleft(\file)$ and~$\fileleft(\file')$ are visited for the first time according to~$\minibatches_1$, and let $x' = \fileleft(\file') - \fileleft(\file)$; note that $x \geq x' \geq k = \size(\file)$, $x \neq x'$ if and only if~$\minibatches_1$ contains one or more mini-batches $\minibatch'$ such that $\files(\minibatch') \subset \files(\minibatch)$, and  $x' \neq k$ if and only if $\file \neq \file'-1$. Let $u = (x'-k)/k$ be the number of files covered by~$\minibatch$ in addition to~$\file$ and~$\file'$.
			
			We compare $\objectivevalue(\minibatches_1)$ with $\objectivevalue(\minibatches_1')$, where $\minibatches_1' = (\minibatches_1 \setminus \minibatch) \cup (\file,\file'-1)$; this modification does not change the execution time of other mini-batches, so we restrict our attention to response times of requests in~$\setrequests(\minibatch)$ starting from the moment when~$\fileleft(\file')$ is being visited for the first time. 
			For~$\minibatches_1$, this value equals $ (x) + (x + x') = 2x + x'$, where the first and seconds terms in the leftmost expression are the response times for requests in~$\setrequests(\file)$ and~$\setrequests(\file')$, respectively. For~$\minibatches_1'$, the waiting time of~$\setrequests(\file')$ vanishes and the response times equal 	$2k + x$, so the difference of these values from~$\minibatches_1$ to~$\minibatches_1'$ is  $ x + x' - 2k$. 
			The remaining~$u$ requests have their response times increased by~$2k$ each, so we have
			\[
			\objectivevalue(\minibatches_1) - \objectivevalue(\minibatches_1') =  x + x' - 2k - 2(x' - k) = x - x',
			\]
			As~$x \geq x'$,
			%
			one can always substitute~$\minibatches_1$ for~$\minibatches_1'$ without increasing the objective value by following the procedure above. 	As we have made no further assumptions about~$\minibatches_1$, 	
			it follows that one can always generate an optimal solution containing only atomic mini-batches. 
			%
			%
			%
			%
			%
		\end{proof}	
		
		We say that a file~$\file$ is a \textit{hole} in~$\minibatches_1$ if $\file \notin \files(\minibatches_1)$ and if there exists a file~$\file'$ such that $\file' < \file$ and $\file' \in  \files(\minibatches_1)$, i.e., $\file$ is not scheduled to be traversed in Phase 1 according to~$\minibatches_1$ but there is at least one file positioned to its left that will. The lemma below shows that such configurations do not occur in optimal solutions.
		
		\begin{lemma}\label{lemma:noholes}
			Optimal solutions do not have holes.
		\end{lemma}
		\begin{proof}
			Let us assume for the sake of contradiction that~$\minibatches_1$ is optimal, consists solely of atomic mini-batches (by Lemma~\ref{lemma:onlyatomic}), and contains $h > 0$ holes. Let~$\file$ be the leftmost file in $\files(\minibatches_1)$, i.e.,  all holes appear to the right of~$\file$. If mini-batch~$(\file,\file)$ is removed from~$\minibatches_1$, the waiting time of its request increases by $2\fileleft(\file)$, whereas the waiting times of all other pending requests decrease by $2k(\fileleft(\file)/k + h)$. If $h > 0$, $\objectivevalue(\minibatches_1 \setminus \{(\file,\file)\}) < \objectivevalue(\minibatches_1)$, thus contradicting the optimality of~$\minibatches_1$. 
		\end{proof}	
		
		Finally, the lemma below shows that optimal plans may always have their leftmost mini-batch removed without changes in the objective value. 
		\begin{lemma}\label{lemma:shrinkschedule}
			If $\minibatches_1$ is optimal,	$\objectivevalue(\minibatches_1 \setminus \minibatch) = \objectivevalue(\minibatches_1)$ if $\minibatch$ is the leftmost mini-batch in $\minibatches_1$.
		\end{lemma}
		\begin{proof}
			From Lemma~\ref{lemma:noholes}, we have that~$\minibatches_1$ contains no holes. Moreover, from Lemma~\ref{lemma:onlyatomic} we may assume that~$\minibatches_1$ contains only atomic mini-batches.
			If $\file$ is the file associated with the leftmost mini-batch~$\minibatch$ in $\minibatches_1$, we have $2k(\fileleft(\file)/k + h) = 2\fileleft(\file)$, so the reduction in the waiting time of pending requests is equal to the increase in the waiting time of the request in~$\setrequests(\file)$. Therefore, $\objectivevalue(\minibatches_1 \setminus \minibatch) = \objectivevalue(\minibatches_1)$ and the result holds.
		\end{proof}	
		The main result follows directly from the iterative application of Lemma~\ref{lemma:shrinkschedule} on any optimal solution for the problem consisting solely of atomic mini-batches..
	\end{proof}

	%
	%

	Next, we show that optimal solutions for \LTSP also have a relatively simple structure (computable in polynomial time) if each file has at most one associated request.

	\begin{theorem}\label{thm:forcedholesoptimal}
		If  $\size(\file) = k$ for some $k \in \mathbb{N}$ and $\nrequests(\file) \leq 1$ for every $\file$ in $\files$, $\minibatches_1 = \{ (\file,\file): \file \in \files, \setrequests(\file) = 1 \}$ is optimal.
	\end{theorem}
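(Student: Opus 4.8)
The plan is to restrict attention to atomic mini-batches, encode each candidate schedule by the set of files it reads in Phase~1, and then prove optimality of the full set by a single exchange inequality. Index the files $1,\dots,N$ from left to right and let $S=\{\file\in\files:\nrequests(\file)=1\}$ be the requested ones. By Lemma~\ref{lemma:onlyatomic} (whose proof only uses that all files share a common size) it suffices to consider schedules built from atomic mini-batches, and by Proposition~\ref{minibatches-requested-extremities} together with Corollary~\ref{one-minibatches-per-file} each such mini-batch sits on a distinct requested file. Hence every schedule worth considering is determined by a set $T\subseteq S$ of files read in Phase~1, and I write $v(T)$ for the corresponding objective value; the claim is that $v(S)\le v(T)$ for all $T\subseteq S$.

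First I would express the response time of each request in terms of $T$. Since every file has size $k$ and the files are consecutive, a request on a file $\file\in T$ is serviced in Phase~1, and its response time equals the distance $m-\fileleft(\file)$ from the end of the tape down to $\file$, increased by $2k$ for each mini-batch of $T$ lying to the right of $\file$ (its read-and-return detour is completed before the head reaches $\file$). A request on a file $\file\in S\setminus T$ is serviced in Phase~2, so its response time is the time at which Phase~2 starts --- which already contains the full $2k|T|$ spent on detours --- plus the rightward distance from the leftmost requested file to $\fileleft(\file)$. The uniform size enters only through the fact that each Phase-1 mini-batch costs exactly $2k$.

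The heart of the argument is an \emph{exchange step}. Fix $T\subsetneq S$ and let $p=\max(S\setminus T)$ be the rightmost requested file not yet scheduled in Phase~1. Moving $p$ into Phase~1 shortens its own response time, lengthens by $2k$ the response time of every file of $T$ to its left, and lengthens by $2k$ the response time of every remaining Phase-2 request; once the terms coming from mini-batches to the left of $p$ cancel, a direct computation yields
\[
v(T)-v(T\cup\{p\}) = 2k\left[(p-\min S)-\big(|S\setminus T|-1\big)\right].
\]
The decisive point is an \emph{integrality/counting} observation: the $|S\setminus T|-1$ requested files of $S\setminus(T\cup\{p\})$ all lie strictly to the left of $p$ and no further left than $\min S$, hence occupy distinct slots among the $p-\min S$ file positions of $\{\min S,\dots,p-1\}$. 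Therefore $|S\setminus T|-1\le p-\min S$, the bracket is nonnegative, and adjoining the rightmost missing requested file never increases the objective.

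To conclude, I would iterate this step: from any $T\subseteq S$, repeatedly adjoining the current rightmost missing requested file produces a chain $T\subseteq\cdots\subseteq S$ along which $v$ is non-increasing, so $v(S)\le v(T)$ for every $T$; combined with the reduction to atomic mini-batches this gives that $\minibatches_1=\{(\file,\file):\file\in\files,\ \setrequests(\file)=1\}$ is optimal. I expect the exchange computation to be the main obstacle, especially making precise the cancellation of the terms associated with mini-batches between $\min S$ and $p$; the counting inequality that follows is where the uniform file size is indispensable, and it is exactly what rules out the non-monotone behaviour one sees when adding an interior (rather than the rightmost) requested file, for which $v$ can strictly increase.
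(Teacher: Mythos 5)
Your proof is correct, and its core is the same as the paper's: after the reduction to atomic mini-batches via Lemma~\ref{lemma:onlyatomic}, both arguments hinge on the effect of adjoining the \emph{rightmost} requested file missing from Phase~1 --- your exchange formula $v(T)-v(T\cup\{p\})=2k\bigl[(p-\min S)-(|S\setminus T|-1)\bigr]$ is exactly the comparison carried out inside the paper's Lemma~\ref{lemma:forceexecution}. Where you genuinely diverge is in the packaging: the paper runs that comparison as a proof by contradiction that yields a \emph{strict} improvement only when an unrequested file lies to the left of the chosen file (the quantity $e>0$ in Lemma~\ref{lemma:forceexecution}), and must then treat the leftmost all-requested prefix of the tape separately by delegating to Theorem~\ref{thm:emptyoptimal} (hence implicitly to Lemmas~\ref{lemma:noholes} and~\ref{lemma:shrinkschedule}). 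Your version replaces this case split by a single weak inequality --- the counting bound $|S\setminus T|-1\le p-\min S$, with equality precisely in the all-requested regime --- and then iterates right-to-left from an arbitrary $T$ up to $S$, which makes the proof self-contained (no appeal to Theorem~\ref{thm:emptyoptimal}) and makes explicit why the order of insertion matters, a point the paper leaves implicit; as you note, inserting an interior file first can strictly increase the objective. Two small caveats: your parenthetical claim that Lemma~\ref{lemma:onlyatomic} ``only uses that all files share a common size'' is loose --- its proof also uses the request pattern, though it does extend to $\nrequests(\file)\le 1$ since interior files without requests only weaken the increase term, which is exactly what the paper asserts at the start of its own proof; and your encoding should formally exclude the leftmost requested file from $T$ (the paper's convention is that this file never belongs to $\minibatches_1$, its execution being the start of Phase~2), but since your exchange yields a net change of zero for that file, nothing breaks.
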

	\begin{proof}
		
		First, note that Lemma~\ref{lemma:onlyatomic} also applies for this family of instances of \LTSP, so we can assume without loss of generality that all mini-batches composing optimal solutions are atomic.
		
		\begin{lemma}\label{lemma:forceexecution}
			If~$\minibatches_1$ is optimal and $\nrequests(\file) = 0$, $\file' \in \files(\minibatches_1)$ for every file~$\file'$ such that $\file' > \file$ and $\nrequests(\file') = 1$.
		\end{lemma}
		\begin{proof}
			
			Let us assume for the sake of contradiction that~$\minibatches_1$ is optimal, let~$\file$ be the rightmost file associated with one request which does not belong to~$\files(\minibatches_1)$, and let $e > 0$ be the number of files without requests positioned to the left of~$\file$. 
			
			By including~$\{\file,\file\}$ into~$\minibatches_1$, the waiting time of~$\setrequests(\file)$ gets reduced by $2\left( \fileleft(\file) + \sum\limits_{\file' \in \files(\minibatches): \file' < \file}\size(\file') \right)$, with the second term representing the time spent with mini-batches, whereas the overall response time of other requests pending by the time~$\fileleft(\file)$ is visited for the first time increases by $2k(l(f)/k - e)$.
			If $e > 0$, we have $\objectivevalue(\minibatches_1 \cup \{\file\})  < \objectivevalue(\minibatches_1)$, so $\minibatches_1$ is sub-optimal and the result follows.
		\end{proof}
		
		Lemmas~\ref{lemma:onlyatomic} and~\ref{lemma:forceexecution} provide a direct characterization of optimal solutions for all files located to the right of the first file without pending requests. For the remaining files, it suffices to observe that the problem gets reduced to the one addressed in Theorem~\ref{thm:emptyoptimal} and that \GS also delivers an optimal solution in these cases, so the main result follows.
	\end{proof}
	
	Finally, despite their simplicity, we remark that these special cases of~\LTSP are of practical relevance. One example is seismic interpretation, in which specialists may wish to inspect one or more files containing segmented images of geological structures, which have roughly the same size, in order to check for analogous regions in different parts of the world.


	\subsubsection*{On the hardness of~\LTSP}

	We were not able to identify the  computational complexity of \LTSP, so a few comments (complementing the discussion presented in Section~\ref{sec:literature}) are in order. 
	
	By inspecting more carefully the particular structure of~\LTSP, one can see that non-atomic mini-batches bring a recursive structure to the problem, in the sense that the concepts of Phase 1 and Phase 2 also apply to their execution and, in particular, that smaller mini-batches may be executed before the associated Phase 1. For instance, if mini-batch~$\minibatch$ is selected for execution and $\files(\minibatch) = \{\file_i,\file_{i+1},\file_{i+2} \}$, an optimal solution might also include the mini-batch $(\file_{i+1},\file_{i+1})$ (this would make sense in scenarios where~$\nrequests(\file_{i+1})$ is very large, for example).  Under this interpretation, a solution for the~\LTSP can be interpreted as the partition of the tape in up to~$|\files|$ components, where each part may be associated with a file that will only be executed in Phase 2, with an atomic mini-batch, or with a non-atomic mini-batch (which can be further divided using the same scheme, thus defining the recursive structure). We were not able to identify an exact polynomial-time combinatorial algorithm for~\LTSP by exploring this structure, though, and it is not clear whether such an algorithm exists. The main challenge seems to be interdependence across the selected blocks. First, the impact of  a mini-batch  depends on the mini-batches positioned to its right, as they define the number of pending requests. Additionally, the selection of a mini-batch is impacted by its duration and by the amount of time that the tape  will need to move towards the leftmost position of the mini-batch, so decisions made on the left also matter (intuitively, more mini-batches on the left make distances larger). The algorithms described above are ``myopic'', in the sense that they verify solely the inclusion or deletion of a single mini-batch in the decision process. Considering two or more mini-batches simultaneously could naturally improve the performance of these strategies, but running times become prohibitively large. Additionally, the number of possible solutions~$\minibatches_1$  is $O(2^{|\files|})$, and it is not clear whether straightforward memoization techniques (e.g., based on dynamic programming) can  avoid the inspection of an exponentially number of configurations.  Given these  challenges, 
	we conjecture that~\LTSP is NP-hard.

	\subsection{Extensions with Release Times}
	
	The structural properties of the Linear Tape Scheduling Problem with Release (\LTSPR) times  are significantly different from those of~\LTSP. In particular, the scheduling of mini-batches becomes  more challenging; for instance, one can construct scenarios where a very large number of requests are release in the middle of the planning horizon in order to show that results such as Proposition~\ref{prop:offline-has-simple-schedule} and Corollaries~\ref{one-minibatches-per-file} and~\ref{minibatches-start-on-first-pass} do no hold for~\LTSPR. We show below that~\LTSPR 
	is NP-complete.
	
	\begin{theorem}\label{thm:ltsprhard}
		\LTSPR is NP-complete.
	\end{theorem}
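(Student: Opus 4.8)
The plan is to establish both directions of NP-completeness. For membership in NP, I would take as a certificate the execution schedule itself --- equivalently, the set of mini-batches together with the order in which they and the final rightward pass are performed. Such a certificate has size polynomial in $|\files| + |\setrequests|$, and given it one can simulate the motion of $\tape$ in linear time, check that every request $\request$ is serviced only at a time-step at least $\releasetime(\request)$ (re-reading a file whenever a pending request was released after an earlier traversal), and accumulate the sum of response times to compare against a target value. Hence \LTSPR is in NP.

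For hardness I would reduce from a weakly NP-complete number problem --- PARTITION (or, mirroring the authors' abandoned offline attempt, Subset-Sum/Knapsack). The reason to expect this to succeed for \LTSPR even though the complexity of plain \LTSP is open is precisely that release times break the clean structure of Proposition~\ref{prop:offline-has-simple-schedule} and Corollaries~\ref{one-minibatches-per-file} and~\ref{minibatches-start-on-first-pass}, giving the extra expressive power needed to encode a binary ``early versus late'' choice per item. Concretely, from integers $a_1,\dots,a_n$ with $\sum_i a_i = 2B$ I would build a tape whose files include $n$ ``choice'' files $\file_1,\dots,\file_n$, with $\size(\file_i)$ and $\nrequests(\file_i)$ chosen to encode $a_i$, laid out so that the scheduler must decide, for each $\file_i$, whether to service its requests in an early mini-batch on the leftward sweep or to defer them to the final rightward pass. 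A small number of critical requests with carefully placed release times would act as a threshold gadget penalizing any schedule whose total ``early'' reading time deviates from $B$: a request released at time $\tau$ on a file the tape would otherwise traverse before $\tau$ forces either a costly re-traversal or an equally costly detour, so feasibility at the target objective becomes possible exactly when some subset of the $a_i$ sums to $B$. As in the commented Knapsack construction, I would also add a dummy file carrying an enormous number of requests (on the order of the product of the other parameters) to pin down the structure of the optimum, guaranteeing that an optimal schedule uses only the intended atomic choice mini-batches and never wastes time that would blow up those $\nrequests$ response times.

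The equivalence argument then splits as usual: a subset summing to $B$ yields a schedule meeting the objective bound, and conversely any schedule meeting the bound must, thanks to the forcing gadget, induce such a subset. The hard part will be the calibration. Unlike in \LTSP, optimal \LTSPR schedules need not be ``nice,'' so I cannot simply quote a structural normal form; I must instead choose the sizes, request counts, release times, and target value so that (i) the dummy file forces an essentially canonical schedule shape, (ii) the geometry --- the distances traversed by $\tape$, which is what actually determines response times --- translates the arithmetic of $\sum_i a_i$ into the objective exactly, with no off-by-one slack that would let a near-miss subset succeed, and (iii) all quantities stay polynomial in the input size so the reduction is genuinely polynomial. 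Getting the release times to enforce ``exactly $B$'' rather than merely ``at most $B$,'' and verifying that no unintended non-atomic mini-batch or reordering can beat the intended optimum, is where the real work lies.
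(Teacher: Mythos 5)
Your skeleton is the same as the paper's: the paper also reduces from Knapsack, also uses a dummy file $d$ at the left end of the tape carrying an enormous number of requests, and also uses a release time to turn the knapsack capacity into a budget on how much the leftward sweep (Phase 1) may be extended. Your NP-membership argument (schedule as certificate, linear-time simulation) is fine and is in fact more explicit than the paper's. The gap is in the hardness direction, where you defer exactly the parts that constitute the proof, and where the one design decision you do commit to points the wrong way. You want the release times themselves to penalize any schedule whose early reading time \emph{deviates} from $B$ in either direction. The two-sided version is both unnecessary and problematic: undershooting the budget never forces a re-traversal or detour, because a tape that arrives early simply services the late-released requests at their release time with zero waiting, so no release-time gadget can charge for it. The paper's construction is one-sided. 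All $\nrequests(d) = 3V(W + \size(d) + k)$ dummy requests are released at the single time $W + \size(d) + k$, i.e., full-tape traversal time plus slack $k$; extending Phase 1 by more than $k$ delays every one of them, costing at least $3V(W+\size(d)+k)$, which swamps the at most $2V(W+\size(d)+k)$ that early servicing of knapsack requests could ever save. The \emph{lower} side is enforced not by release times but by the objective: choosing $\size(d) > kV/2$ makes servicing any request in Phase 1 strictly profitable, so an optimal schedule maximizes the number of requests served early subject to the budget --- which is precisely Knapsack (and exactness for a Partition-style instance then falls out of value maximization, not out of the timing gadget you were trying to build).

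The second missing piece is the one you flag as ``where the real work lies'' but do not supply: ruling out unintended non-atomic mini-batches. The paper does this with a structural gadget you would need to invent: between every two consecutive item files it interleaves an ``artificial'' file of size exactly $k$ with zero requests. Any non-atomic mini-batch must span one of these spacers and therefore costs at least $2k$ of Phase-1 extension by itself, exceeding the entire budget; hence optimal solutions use only atomic mini-batches, and the cost of selecting item file $f(o)$ (of size $w_o/2$, traversed twice) is exactly $w_o$. Without the spacers, nothing prevents a mini-batch from covering several item files and paying their inter-file gaps once rather than twice, and the clean correspondence between Phase-1 selections and knapsack subsets collapses. So: right reduction source, right dummy-file idea, right intuition that release times supply the expressive power \LTSP lacks, but the two gadgets that make the equivalence provable --- the single one-sided release-time threshold and the size-$k$ spacer files --- are absent, and the two-sided threshold you propose in their place cannot be realized in this model.
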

	\begin{proof}
		Let~$I$ be an instance of the Knapsack Problem with packing capacity $k$ and a set~$O$ of objects, with $v_o$ and~$w_o$ representing the value and the weight of object~$o \in O$, respectively.

		We construct an instance~$I'$ of~\LTSPR as follows. Each element of~$O$ is associated with a ``knapsack''  file~$f(o)$ in~$\files$ such that $s(f(o)) = w_o/2$ and $\nrequests(f(o)) = v_o$, i.e., the size and the number of requests of~$f(o)$ are related to the weight and to the value of~$o$, respectively.  $\files$ also contains a set~$\mathcal{Z}$ of ``artificial'' files such that $|\mathcal{Z}| = |O|-1$ and, for each $z \in \mathcal{Z}$, $\size(z) = k$ and~$\nrequests(z) = 0$. For every subset $\files'$ of $\files$, let~$O(\files')$ denote  the set of elements in~$O$ whose respective files belong to~$\files'$, i.e., $O(\files') = \{o \in O: f(o) \in \files'\}$, and for every $O' \subseteq O$, let $\files(O') = \{f(o) \in \files: o \in O'\}$. All requests associated with knapsack items are released at time-step zero, i.e., if $\request \in \setrequests(\files(O))$,  then $\releasetime_\request = 0$.
		Elements of $O(\files)$ and~$\mathcal{Z}$ appear in an interleaved manner in the tape, i.e., each artificial file has a knapsack file positioned on its left and another knapsack file on its right; in particular,  the rightmost file of the tape is a knapsack file. The order with which the elements of~$\files(O)$ are distributed in the tape is arbitrary, and so is the order of~$\mathcal{Z}$. Finally, set~$\files$ also contains a ``dummy''  file~$d$ positioned in the beginning of the tape (i.e., $l(d) = 1$); its size~$\size(d)$ and~$\nrequests(d)$ will be defined next. See Figure~\ref{fig:knapsacktape} for an illustration of this construction.
		We denote  the sum of the lengths of all files in $\files \setminus \{d\}$ by $W$; note that, by construction, $W = k(|O|-1) + \sum_{o \in O}w_o/2$.

		\begin{figure}[h!]
			\centering
			\usetikzlibrary{arrows,backgrounds,snakes}
			\begin{tikzpicture}[scale=0.8]
			
			\newcommand\verticalBorders[2] {
				\draw (#1,-0.15) -- (#1,0.15); 
				\draw (#2,-0.15) -- (#2,0.15); 
			}
			
			\newcommand\labels[4] {
				\draw (#1,-0.15) -- (#1,0.15); 
				\draw (#2,-0.15) -- (#2,0.15); 
				\draw (#1,0) -- (#2,0) node[midway,above]{#3}; 
				\draw[<->] (#1+0.025,-0.25) -- (#2-0.025,-0.25) node[midway,below]{#4}; 
			}
			
			\newcommand\drawfile[4] {
				\verticalBorders{#1}{#2};
				\labels{#1}{#2}{#3}{#4};	
			}

			\draw[segment length=1cm] (0,0) -- (20,0);
			
			\drawfile{0}{5}{$d$}{$\size(d)$};	
			
			\drawfile{5}{6.5}{$\file(o_{|O|})$}{$\frac{w_{o_{|O|}}}{2}$};
			
			\drawfile{6.5}{9.5}{$z_{|O|-1}$}{$k$};

			\draw (9.5,0) -- (12,0) node[midway,above]{$\ldots$};

			\drawfile{19}{20}{$\file(o_1)$}{$\frac{w_{o_1}}{2}$};
			
			\drawfile{16}{19}{$z_1$}{$k$};
			
			\drawfile{15}{16}{$\file(o_2)$}{$\frac{w_{o_2}}{2}$};
			
			\drawfile{12}{15}{$z_2$}{$k$};
			
			\draw[<->] (5+0.025,-1.25) -- (20-0.025,-1.25) node[midway,below]{$W = k(|O|-1) + \sum\limits_{o \in O}\frac{w_o}{2}$};

			\end{tikzpicture}
			\caption{Reduction of Knapsack Problem to~\LTSPR used in the proof of Theorem~\ref{thm:ltsprhard}.}
			\label{fig:knapsacktape}
		\end{figure}

		Let~$\minibatches_1 \subseteq \minibatches$ be the set of (not necessarily atomic) mini-batches selected for execution in Phase 1, $\Delta_t(\minibatches_1) = \sum_{\minibatch \in \minibatches_1}2\size(\minibatch)$  be the number of time-steps by which the duration of Phase 1 increases due to their execution, and $V_1 =  \sum_{\minibatch \in \minibatches_1}\nrequests(\minibatch) $ be the number of requests serviced in Phase 1 according to~$\minibatches_1$. 
		Note that  $\sum_{\file \in \files(\minibatches_1) } 2s(\file) \leq \Delta_t(\minibatches_1)$, i.e., each file read in Phase 1 is traversed at least twice.  
		

		Let $V = \sum_{o \in O}v_o$ denote the sum of the values of all elements in~$O$. Assume that $\nrequests(d) = 3V(W + \size(d) + k)$ and that all requests in~$\setrequests(d)$ are released at time step  $W + \size(d) + k$. We claim that, for suitable values of~$\size(d)$, the set~$\files(\minibatches_1)$ selected by any exact algorithm solving~\LTSPR is optimal for~$I'$ if and only if~$O(\files(\minibatches_1))$ is optimal for~$I$.

		First, observe that if $\minibatches_1 = \emptyset$ and Phase 2 starts at time $W + \size(d)+k$, the resulting solution has an overall completion time upper-bounded by~$2V(W + \size(d) + k)$, i.e., requests in~$\setrequests(d)$ are executed immediately after their release whereas all the other~$V$ requests (those associated with knapsack items) will wait at most $W + \size(d) + k$ additional time-units after the beginning of Phase 2. Note that this strategy is feasible, as the tape needs only $W + \size(d)$ time-steps to move to position 1. Conversely, if $\minibatches_1 \neq \emptyset$ and $\Delta_t(\minibatches_1) > k$, Phase 2 will not start before time $W + \size(d) + k + 1$, so the the overall completion time is at least~$3V(W + \size(d) + k)$, as a result from the (strictly positive) waiting time of requests in $\setrequests(d)$. Therefore,  by construction, $\Delta_t(\minibatches_1) \leq k$ for any optimal solution of~$I'$. Moreover, an optimal~$\minibatches_1$ consists solely of atomic mini-batches, as the distance between two knapsack files equals~$k$.
		
		We identify now conditions that should be fulfilled by~$\size(d)$ in order to enforce the inclusion of as many requests associated with knapsack files 
		as possible in an optimal solution of~$I'$. We claim that this property holds  if $\Delta_t(\minibatches_1)V < 2\size(d)$. 
		The expression on the left-hand side is an upper bound on  the maximum increase in waiting times 
		(recall that elements of~$\setrequests(d)$ should not be included in this analysis, as they will only be released afterwards, and that artificial files have no associated request). The term  on the right-hand side is a tight lower bound on the reduction of the waiting time for a single request serviced in Phase 1 (in comparison with that delivered by scheduling plan in which  $\minibatches_1 = \emptyset$).  Thus, if $\size(d) > \frac{kV}{2}$, the execution of any read request in Phase 1 will reduce the overall completion time.
		
		By construction, an optimal solution for~$I'$ consists of a set~$\minibatches_1^*$ such that $V_1^* = \sum_{\file \in \files(\minibatches_1^*)}\nrequests(\file)$ is maximized whereas $\Delta_t(\minibatches_1^*) =  2\sum_{\file \in \files(\minibatches_1^*) }\size(\file) \leq k$.  Moreover, 
		set~$O(\files(\minibatches_1^*))$ is such that $\sum_{o \in O(\files(\minibatches_1^*))}w_o \leq k$ and $\sum_{ o \in O(\files(\minibatches_1^*)) }v_o = V_1^*$, so solutions selected by an exact algorithm of~\LTSPR with release times are associated with feasible solutions for~$I$. 
		As every solution for~$I$ can also be directly reduced to a solution for~$I'$, it follows that both problems are equivalent. Finally, similar arguments can be used to show the equivalence between the respective decision versions of both problems, so we conclude that~\LTSPR  is NP-complete. 
	\end{proof}

	
	We investigate now~\OLTSP, the online extension  of~\LTSPR. The following results shows that there is no algorithm capable of delivering solutions differing from the (offline) optimal by a constant multiplicative factor. 
	
	\begin{proposition}[\cite{cardonha2016online}, Proposition 5]\label{prop:onlinebad}
		There is no $c$-competitive algorithm for~\OLTSP for any constant~$c$.
	\end{proposition}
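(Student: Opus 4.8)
The plan is to exploit the informational gap between an online strategy and the offline optimum through an \emph{adaptive adversary} argument: an online algorithm must commit its head position before it learns \emph{where} a large batch of simultaneously released requests will appear, whereas the offline optimum knows the release schedule and can pre-position the head to service that batch with zero waiting time. Fix an arbitrary deterministic algorithm~$\mathcal{A}$ for~\OLTSP together with a candidate competitive ratio~$c$ and additive constant~$\alpha$; the goal is to construct an instance on which $\mathcal{A}$ incurs cost larger than $c\cdot\mathrm{OPT}+\alpha$. First I would set up the geometry: take $\lengthtape$ large and let $\files$ consist of a unit-size file $\file_L$ with $\fileleft(\file_L)=1$, a unit-size file $\file_R$ with $\fileleft(\file_R)=\lengthtape-1$, and a single request-free filler file occupying the middle. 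The two marked files are then $D:=\lengthtape-2=\Theta(\lengthtape)$ apart, and the head starts at position~$\lengthtape$.

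The adversary releases nothing until a large time $T:=\lengthtape$. Since $\mathcal{A}$ observes no request on $[0,T)$, its trajectory up to~$T$ is a fixed function of the empty input prefix, so its head occupies a well-defined position~$p$ at time~$T$, \emph{independently} of whatever is released at~$T$. Let $\file_X\in\{\file_L,\file_R\}$ be whichever marked file is farther from~$p$, so that $\mathrm{dist}(p,\fileleft(\file_X))\ge D/2-O(1)$. At time~$T$ the adversary releases $N$ requests for~$\file_X$ and one request for the other marked file~$\file_{\bar X}$. For the optimum, I would move the head to $\fileleft(\file_X)$ by time~$T$ (feasible because $T=\lengthtape$ exceeds the distance from the start) and begin traversing $\file_X$ exactly at~$T$; since a request's waiting time ends as soon as $\file(\request)$ starts to be read, all $N$ requests for $\file_X$ then have waiting time~$0$. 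The optimum afterwards crosses to $\file_{\bar X}$ and reads it left to right, giving the lone remaining request waiting time at most~$D$. Hence $\mathrm{OPT}\le D=O(\lengthtape)$, \emph{independently of}~$N$.

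For~$\mathcal{A}$, at time~$T$ its head is at~$p$, at distance $\ge D/2-O(1)$ from $\fileleft(\file_X)$; because the head advances one block per time-step and $\file_X$ must be traversed to service its requests, reading of $\file_X$ cannot start before $T+D/2-O(1)$. All $N$ requests for $\file_X$ therefore incur waiting time $\ge D/2-O(1)$, so $\mathcal{A}$'s cost is at least $N\bigl(D/2-O(1)\bigr)=\Omega(N\lengthtape)$. Consequently the ratio is $\Omega(N\lengthtape)/O(\lengthtape)=\Omega(N)$, and the absolute cost $\Omega(N\lengthtape)$ exceeds $c\cdot\mathrm{OPT}+\alpha\le cD+\alpha$ once $N$ is taken large enough. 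As $\mathcal{A}$, $c$, and $\alpha$ were arbitrary, no constant-competitive algorithm can exist.

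The two steps that need care are the following. First, the crux is justifying that $\mathcal{A}$'s position~$p$ at time~$T$ is genuinely independent of the adversary's later choice: this is the standard indistinguishability observation that the two candidate instances (batch at~$\file_L$ versus at~$\file_R$) agree on $[0,T)$, which is precisely what licenses selecting the worse endpoint after seeing~$p$. Second, one must confirm, under the paper's convention that waiting time ends when the file \emph{starts} being read, that the optimum truly attains waiting time~$0$ on the whole batch while keeping the single request's contribution at $O(\lengthtape)$; this requires $T\ge\lengthtape$ so the pre-positioning is feasible and requires respecting the left-to-right reading constraint when the optimum subsequently crosses to~$\file_{\bar X}$. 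I expect the indistinguishability step to be the main conceptual obstacle, since everything else reduces to the distance bookkeeping above. For a randomized~$\mathcal{A}$ one would instead place the batch at~$\file_L$ or~$\file_R$ by a fair coin and invoke Yao's principle, but for the deterministic statement the adaptive argument suffices.
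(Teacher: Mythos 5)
Your proposal is correct, but it follows a genuinely different route from the paper's proof. The paper's instance has two files --- a small $\file_1$ (size $k$) at the left end and a huge $\file_2$ (size $k^2$) at the right --- and its adversary is adaptive in \emph{timing}: it releases one request for $\file_2$ and splits into cases. If the algorithm waits $\omega(k^2)$ steps before servicing it, nothing else is released and the ratio is already unbounded; if it services it within $O(k^2)$ steps, then $k$ requests for $\file_1$ are released at the exact moment the head reaches $\fileright(\file_2)$, forcing cost $\Omega(k^3)$ against an optimum of $O(k^2)$, i.e.\ a gap $\Omega(k)$ that grows with the tape length. Your adversary is instead \emph{positional}: a silent period $[0,T)$ plus determinism pins down the head position $p$ at time $T$ (the indistinguishability step you flag, which is sound), and the weighted batch of $N$ requests goes to the endpoint at distance at least $D/2$ from $p$, with one request at the other endpoint keeping $\mathrm{OPT}$ bounded away from zero. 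This buys you a single uniform argument with no case split on the algorithm's eagerness (the distance bound applies regardless of what the algorithm did during the silence), a gap $\Omega(N)$ driven by request multiplicity on a \emph{fixed} tape geometry, and explicit handling of the additive constant $\alpha$ plus a sketch of the randomized extension via Yao's principle --- none of which the paper spells out. Conversely, the paper's construction shows the unbounded gap already arises with only $k+1$ unit-weight requests, i.e.\ few requests relative to tape length, rather than requiring an arbitrarily heavy batch. Two small points to tidy in your write-up: your file sizes sum to $\lengthtape-1$ rather than $\lengthtape$ (enlarge the filler by one block), and you should state explicitly that the offline optimum is allowed to idle at $\fileleft(\file_X)$ until time $T$ --- the model permits this (e.g., \LTFS idles when no requests are pending), and the optimum in the paper's second case idles in exactly the same way.
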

	\begin{proof}
		Let~$A$ be an arbitrary algorithm for~\OLTSP, and let us consider a scenario in which~$\files = (\file_1,\file_2)$, with $\fileleft(\file_1) = 1$, $\size(\file_1) = k$, $\fileleft(\file_2) = k+1$, and $\size(\file_2) = k^2$.
		As $\lengthtape = k^2 + k$, any position of the tape can reached within~$O(k^2)$ time-steps.   Let us suppose that some request~$\request$ in $\setrequests(\file_2)$ is released. 
		
		If the time spent by~$A$ to start servicing~$\request$ is~$\omega(k^2)$, no more requests are released. For this scenario, an optimal algorithm would start servicing~$\request$ in time~$O(k^2)$,
		thus yielding a solution whose ratio with that delivered by~$A$ is not bounded by any constant value.
		
		Conversely, if $A$ starts traversing~$\file_2$ in time~$O(k^2)$, $k$ requests associated with~$\file_1$ are released as soon as the tape reaches position~$\fileright(\file_2)$ after executing~$\file_2$. In this case, the overall response time produced by~$A$ would be $O(O(k^2) + k(k^2 + k)) = O(k^3)$, whereas an optimal solution would have moved the tape to~$\fileleft(\file_1)$ first  and then  towards~$\fileright(\file_2)$ immediately after the release of requests in~$\setrequests(\file_1)$, with an overall response time of $O(k^2) + k = O(k^2)$. In this case, the ratio between the solution of~$A$ and the optimal solution also cannot be bounded by any constant value. 
		
		Since no assumption was made about~$A$, we conclude that there is no algorithm able to produce $c$-competitive solutions for~\OLTSP for any constant~$c$. 
	\end{proof}

	The algorithms for \OLTSP presented in this work are \textit{adaptive},  since incoming requests are immediately incorporated and considered for scheduling; that is, we assume that~$\setrequests$ is updated every time the leftmost bit of a file is reached. 
	In order to represent relevant dynamic features of~\OLTSP, we introduce additional notation. The current position of the tape is denoted by~$h$, and~$\file(h)$ denotes the file whose leftmost bit is~$h$; we also say that $h$ is over file~$\file(h)$. Function~$\file(h)$ is not defined in other positions, but this aspect is irrelevant in this work, as preemption is not considered and decisions are not made when the tape is positioned in ``intermediate'' blocks. Two operations may take place when~$h$ is over file~$\file$: $\mathbf{ReadFile}(\file)$ and $\mathbf{ExecuteMiniBatch}((\file,\file'))$ for some file~$\file' \in \files$ such that $\file' \geq \file$. $\mathbf{ReadFile}(\file)$  denotes the rightward traversal of~$\file$; as a consequence of this operation, all requests currently in $\setrequests(\file)$ are serviced and~$h$ changes from $\fileleft(\file)$ to~$\fileright(\file)$. $\mathbf{ExecuteMiniBatch}((\file,\file'))$ represents the execution of mini-batch $(\file,\file')$, i.e., the tape moves from~$\fileleft(\file)$ to~$\fileright(\file')$ and back, so~$h$ will be on~$\fileleft(\file)$ by the end of its execution. 
	\subsubsection*{Linear Tape File System Strategy (\LTFS)}
	Algorithm~\ref{alg:uniquequeue} contains the pseudo-code for the current strategy widely used in the industry by implementations of the Linear Tape File System and, in particular, in the real-world scenarios we investigate in our computational experiments~(\cite{ISO20919}).  As input parameters, \LTFS receives the list~$\setrequests$ of pending requests and the current position~$h$ of the tape. The algorithm employs a FIFO (first-in, first-out) strategy based on the release times of requests in~$\setrequests$ to define its movements, that is, it moves to the left or to the right depending on the current position~$h$ and on the leftmost bit of the file containing the oldest pending request;
	if~$\setrequests = \emptyset$,  \LTFS stops and stays still until a new request is released. \LTFS acts in an opportunistic way, in the sense that all pending requests in~$\setrequests(\file(\request))$ are serviced together with~$\request$. On the other hand, whenever the tape moves to the right, requests associated with files traversed during the movement are ignored and not executed, regardless of their release times; the scheduler only looks into serving the currently given request. Finally, if several requests are released simultaneously, \LTFS applies a random service order to them.
	
	\alglanguage{pseudocode}
	\begin{algorithm}[!ht]
		\caption{Linear Tape File System Strategy (\LTFS)}
		\label{alg:uniquequeue}
		\begin{algorithmic}
			\Procedure{$\LTFS$}{$\mathcal{J},h$} 
			\While{$\mathcal{J} \neq \varnothing$ }
			\LineComment{\emph{Select request with minimum release time}}
			\State $\request := \min\limits_{\request \in \setrequests}\releasetime(\request)$
			\LineComment{\emph{Tape moves to the leftmost bit of $\file(\request)$ }}
			\State $h := l(f(j))$
			\LineComment{\emph{Execute file}}
			\State $\mathbf{ReadFile}(\file(h))$
			\EndWhile
			\EndProcedure
		\end{algorithmic}
	\end{algorithm}

	As an improvement to the original \LTFS algorithm, we propose the Enhanced Linear Tape File System Strategy (\ELTFS), which processes pending requests of files traversed while the tape moves rightwards from position $h$ to the leftmost bit of the target file; by construction, the objective value of the scheduling plan produced by \ELTFS is never larger than the one yielded by~\LTFS. 
	
	
	\subsubsection*{\textbf{Replan}}
	In Algorithm~\ref{alg:replan} we describe \Replan, a novel procedure for~\OLTSP. Function 
	$\LTSP(h)$ returns a schedule $\minibatches_1 \subseteq \minibatches$ for the instance~$I$ of~\LTSP  defined by the set of pending requests and the current position~$h$ of the tape, i.e., \Replan relies on an algorithm for~\LTSP. 
	
	\alglanguage{pseudocode}
	\begin{algorithm}[!ht]
		\caption{\Replan}
		\label{alg:replan}
		\begin{algorithmic}
			\Procedure{$\Replan$}{$\mathcal{J},h$} 
			\LineComment{\emph{Phase 1: move towards leftmost file with pending requests}}
			\While{$\file(h) \neq \min\limits_{(\file,\file') \in \minibatches_1}\file$ }
			\State $\minibatches_1 :=  \LTSP(h)$
			\If {$ \exists \file' \in \files$ s.t. $(\file(h),\file') \in \minibatches_1  $}
			\State $\mathbf{ExecuteMiniBatch}((\file(h),\file'))$
			\EndIf
			\State $h := \fileleft(\file(h)-1)$
			\EndWhile
			
			\LineComment{\emph{Phase 2: move towards rightmost file with pending requests}}
			\While{$\sum\limits_{\file \geq \file(h)}\nrequests(\file) > 0$ }
			\State $\mathbf{ReadFile}( \file(h) )$
			\EndWhile
			\EndProcedure
		\end{algorithmic}
	\end{algorithm}
	
	Following the structure of optimal solutions for~\LTSP, \Replan iteratively executes a two-phases procedure in order to service requests. In Phase 1, the tape moves towards the first block of the leftmost file containing at least one pending request. For this phase, the algorithm maintains a schedule~$\minibatches_1$, which may be updated in each step (or after the arrival of each new request) through the application of some algorithm that solves~\LTSP. Set~$\minibatches_1$ contains only mini-batches starting at the current position or to its left, for reasons we discuss below.
	After finishing Phase 1, \Replan executes Phase 2, in which the tape moves to the right until the rightmost file with pending requests. Afterwards, the algorithm returns to Phase 1 and repeats the process until no more pending requests are left.

	Preliminary experiments involving early versions of \Replan showed that one should be careful 
	in order to avoid the so-called ``shoe-shining'' effect, which takes place when the tape spends a long time moving back and forward over a restricted region (typically close to its center). Shoe-shining may lead to long response times (i.e., starvation) of requests associated with files located in the beginning and in the end of the tape, so it should be avoided. 
	Therefore, \Replan does not execute more than one mini-batch starting from any given file in each iteration of Phase 1, i.e., after executing some mini-batch~$\minibatch$, Replan always proceeds to position~$\fileleft(\minibatch)-1$.



	In our computational experiments, we evaluate the combination of \Replan with some of  the algorithms we presented for~\LTSP. Namely, we assess the performance of~\ReplanSSS, \ReplanGS, \ReplanFGS, and \ReplanLogNFGS, which denote the combination of \Replan with \SSS, \GS, \FGS, and \LogNFGS, respectively. Below, we briefly describe  each of these algorithms.

	\paragraph{\textbf{Replan+SSS}:} This algorithm maintains and executes a single mini-batch $(\file,\file')$, where $\file$ and $\file'$ are the leftmost and rightmost files, respective, associated with  at least one pending request. Phase~1 and Phase~2 finish when~$\fileleft(\file)$ and~$\fileright(\file')$ are reached, respectively. Both Phase~1 and Phase~2 can only be extended during their execution, and not interrupted; for example, if a request associated with file~$\file'' < \file$ is released before the end of Phase~1, the new target destination becomes~$\fileleft(\file'')$.

	\paragraph{\textbf{Replan+GS}:} The difference between \ReplanGS and \ReplanSSS is that the former executes an atomic mini-batch in Phase 1 for every file containing  pending requests. Phase 2 is analogous.
	
	\paragraph{\textbf{Replan+FGS}:} In each step (of after the arrival of a new request), \ReplanFGS updates its schedule using \FGS. Therefore, if file~$\file(h)$ satisfies Corollary~\ref{prop:executeinphase2}, its execution is postponed to Phase~2. 

	\paragraph{\textbf{Replan+LogNFGS}:} \NFGS is computationally expensive,  so we employ solely the combination of \Replan with \LogNFGS, which is asymptotically faster. In each iteration, \ReplanLogNFGS maintains the schedule obtained by \FGS and, by the time file~$\file$ is reached in Phase 1, the opportunity cost of each possible mini-batch starting from~$\file$ and containing at most~$\log(\files)$ files is evaluated. If the smallest opportunity cost is negative, the associated mini-batch is executed; otherwise, the execution of~$\file$ is postponed.

	\section{Computational experiments}\label{sec:experiments}
	

	We present in this section the results of the computational evaluation we performed on the algorithms described for \LTSP and \OLTSP. All experiments were executed  on an Intel(R) Xeon(R) CPU E5-2680 v2 at 2.80GHz. We used the Wilcoxon signed-rank test in order to estimate $p$-values comparing pairs of algorithms (\cite{wilcoxon1945individual}). 
	All the algorithms were implemented in C++. Overall response times are always reported as relative values; namely, these entries contain the sum of the response times for all instances of the associated configuration (identified in the first column of the same row)  yielded by the respective algorithm  (identified in the first line of the same column) divided by the sum of the response times obtained by a baseline algorithm, defined per class of instances.
	The lower the value, the better, and the best result is marked in bold. Significant running times are reported in seconds, and we did not incorporate the time spent with these procedures into response times; in particular, for the online scenarios, objective values were evaluated as if all scheduling sub-procedures would take a negligible amount of time. 



	\subsection{Instances}
	
	For our experiments, we used two  sets of instances: \textit{Landsat} and \textit{Synthetic}. The Landsat benchmark was obtained from a real-world application involving remote sensing for precision agriculture, whereas
	the Synthetic benchmark was generated to stress the algorithms. The same instances were used in the experiments of~\LTSP and~\OLTSP; for~\LTSP, release times are ignored (or, similarly, assumed to be equal to zero). Details about each class of instances are presented below.

	\subsubsection*{Landsat instances}
	
	
	Landsat instances consist of  multi-spectral data obtained from the Landsat 8 space mission, which images the entire Earth every 16 days. Because the scanning process generates large amounts of data, Landsat datasets are offered as collections of \emph{tiles} (or satellite scenes). Each tile is stored in a single file with approximately  1.7 GB of data and features 12 \emph{bands}, which are matrices of  pixels; 11 bands represent different ranges of the electromagnetic spectrum, and one additional band (particular to Landsat), called QA (for Quality Assessment), classifies pixels, with an accuracy of up to 80\%, as water, snow, ice, cloud, or as invalid due to sensor errors during the image acquisition process~(\cite{Roy14}).

	We used data from 3 zones of interest in our experiments, corresponding to vineyards in different parts of the world: the Atacama desert in Chile, the Serra Gaucha wine region in the south of Brazil, and the Manduria region in Italy. The Atacama region, in special, is known for its extremely dry weather and a mostly cloud-free sky all year round, which translates into crisp satellite imagery. Conversely, the Serra Gaucha and Manduria regions are frequently covered with clouds that, at times, invalidate whole satellite scenes~(\cite{CloudyEarth}).

	The Atacama region  is represented by a cluster of 4 Landsat tiles. The software accessing these tiles computes the Normalized Difference Vegetation Index (NDVI) of the region by reading Landsat bands number 4 and 5, which represent the reflective radiation in visible Red and Near-Infrared wavelengths, respectively. The Serra Gaucha region is covered by a collection of 6 Landsat tiles that, once processed, are converted into false-color images that emphasize vegetation; the image composition procedure used in this operation  requires the use of bands number 4, 5, and 6. The last zone of interest, Manduria, is represented by a set of 5 tiles; the application reading them feeds all 11 bands into a deep neural network that helps classify the vineyards'
	health conditions.
	
	All software processing the aforementioned zones look into the QA band before launching parallel readers for the actual bands of interest. If a scene has a cloud cover above a certain threshold, then no bands other than the QA are read; the missing data is then interpolated or extrapolated at a later time using data from past and future scenes. An implementation detail worth noting is that parallel readers are orchestrated by a high-performance I/O interface implemented on top of Message Passing Interface (MPI)~(\cite{Thakur99}). As a consequence of the collective I/O techniques employed by MPI, release times of jobs related to bands within a same satellite scene are always identical.

	All instances of the Landsat benchmark contain 873 files. The in-house software used in this study always reads bands in parallel through the use of concurrent operating system threads; consequently, we may have up to 12 read requests associated with the same file being triggered simultaneously (one for each band).
	For example, the NDVI software issues 2 requests (for bands 4 and 5) for each tile composing the Atacama region. The volume of read requests is influenced by the occurrence of clouds in each zone of interest; namely, we can say that scenarios with less clouds will have more read requests. Therefore, each instance of the dataset is defined by a triple $\alpha$-$\beta$-$\gamma$; numbers $\alpha$, $\beta$, and~$\gamma$ belong to the interval~$[0,100]$ and indicate the threshold for cloud covers in Atacama, Serra Gaucha, and Manduria, respectively, on that particular scenario.

	The scenarios covered by the Landsat benchmark employ 10 combinations of $\alpha$-$\beta$-$\gamma$;  there are 10 instances per combination, so this dataset is composed of 100 instances. Finally, \LTFS is used as the baseline algorithm in the experiments involving for this benchmark; as its results  were two orders of magnitude higher than those produced by the other algorithms, values reported in the tables equal the actual ratios  multiplied by 1000.

	\subsubsection*{Synthetic instances}
	
	Each instance of this dataset is parameterized by a pair~$|\files|$-$k$, $k \in \{1,3,5\}$, where~$k\lengthtape$ is the length of the time horizon~$\timehorizon$ within which requests were released. The size~$\size(\file)$ of each file~$\file$  is drawn uniformly from~$[1,20]$; from these values, we obtain~$m$ and, together with~$k$, also~$\timehorizon$. In order to generate sequence~$\setrequests(\file)$, we initially drawn an integer value~$\lambda_f$ uniformly from interval $[|\timehorizon|/50,|\timehorizon|/5]$; values for different files in~$\files$ are independent and identically distributed. Afterwards, we drawn~$|\timehorizon|/\lambda_f$ samples from the Poisson distribution parametrized with~$\lambda_f$ and generate the sequence of release times for requests in $\setrequests(\file)$ by setting the arrival time of the $i$-th request as the sum of the first~$i$ values drawn from the distribution; requests whose release times overpass~$|\timehorizon|$ are discarded.  Finally, to reduce bias and make the results of our experiments  statistically significant, we generated 10  instances for each configuration $|\files|$-$k$. Note that, by construction, the number of requests per file is not strongly affected by~$k$; namely, time intervals between releases are expected to increase proportionally to~$k$, but the number of requests released over the whole time horizon are likely to be similar.
	
	Typically, for $k = 1$,  the algorithms need to traverse the whole tape only once. In order
	to properly stress the online algorithms, we also generated instances for larger values of~$k$. We remark that scenarios where~$k$ is very large are not expected  in practice, though, as magnetic tapes are typically not the technology of choice in scenarios where the tape needs to be completely traversed several times within a relatively short period of time (i.e., within minutes), so we restrict the benchmark to configuration with $k \leq 5$.
	
	\LTFS delivered very bad results for all  instances of this dataset, both in the offline and in the online cases, losing for all the other algorithms by 4 orders of magnitude. Therefore, in order to make the differences in performance between the other algorithms more explicit, we decided to report the relative results for Synthetic instances with respect to \ELTFS instead of \LTFS. Moreover,  the reported relative results equal the actual ratios multiplied by~100.

	\subsection{Offline algorithms}

	We compare the performance of \LTFS with all algorithms for \LTSP that have been presented in this work:  \SSS (Simple Schedule Strategy), \GS (Greedy Strategy), \FGS (Filtered Greedy Strategy), \NFGS (Non-atomic Filtered Greedy Strategy), \LogNFGS, and \ELTFS (Enhanced Linear Tape File System Strategy). 
	In order to avoid a completely random behavior of \LTFS and \ELTFS, the actual release times (which are ignored for the tests involving \LTSP) are used by both algorithms to define the order with which requests should be executed.



	\subsubsection*{Synthetic instances}

	The performance of the offline algorithms on the instances of the  Synthetic dataset are presented in Table~\ref{tab:offsynthetic}. In addition to the relative performance of the proposed algorithms with respect to \ELTFS, we also report the average running times of \NFGS and \LogNFGS; all the other strategies spent less than 0.01 seconds to solve each instance of the dataset. 
	\NFGS gave the best results, but both \LogNFGS and \FGS had very similar performances; we conclude that \FGS seems to be more adequate for large-scale practical scenarios of~\LTSP, as its running time is negligible and the quality of its solutions is almost as good as those of \NFGS. 

	\begin{table}
		\footnotesize
		\centering
		\caption{Performance of algorithms for \LTSP on Synthetic instances.}
		\label{tab:offsynthetic}
		\begin{tabular}{|c|ccccc|cc|}
			\hline
			& \multicolumn{5}{c}{Relative objetive value ($\times 10^2$)  }  & \multicolumn{2}{c}{Time (sec.)} \\
			Instance   &    \SSS &     \GS &    \FGS & \NFGS          &   \LogNFGS &    \NFGS &   \LogNFGS \\
			\hline
			20000-1    & 80.53 & 80.71 & 64.82 & \textbf{64.82} &      64.82 &  1.634 &      0.641 \\
			20000-3    & 89.71 & 89.94 & 72.20  & \textbf{72.20} &      72.20  &  1.628 &      0.643 \\
			20000-5    & 83.22 & 83.35 & 66.92 & \textbf{66.92} &      66.92 &  1.627 &      0.636 \\
			40000-1    & 83.30  & 83.19 & 66.87 & \textbf{66.87} &      66.87 &  6.893 &      2.874 \\
			40000-3    & 82.95 & 83.05 & 66.71 & \textbf{66.71} &      66.71 &  6.889 &      2.852 \\
			40000-5    & 87.96 & 87.71 & 70.61 & \textbf{70.61} &      70.61 &  6.943 &      2.899 \\
			60000-1    & 82.24 & 82.33 & 66.15 & \textbf{66.15} &      66.15 & 15.855 &      6.720  \\
			60000-3    & 91.99 & 92.07 & 73.93 & \textbf{73.93} &      73.93 & 15.948 &      6.758 \\
			60000-5    & 86.10  & 86.15 & 69.24 & \textbf{69.24} &      69.24 & 15.619 &      6.623 \\
			80000-1    & 79.40  & 79.42 & 63.82 & \textbf{63.82} &      63.82 & 28.253 &     12.045 \\
			80000-3    & 79.27 & 79.23 & 63.68 & \textbf{63.68} &      63.68 & 28.582 &     12.195 \\
			80000-5    & 83.53 & 83.56 & 67.11 & \textbf{67.11} &      67.11 & 28.598 &     12.213 \\
			100000-1   & 79.50  & 79.53 & 63.91 & \textbf{63.91} &      63.91 & 44.227 &     18.841 \\
			100000-3   & 83.27 & 83.17 & 66.87 & \textbf{66.87} &      66.87 & 42.744 &     18.304 \\
			100000-5   & 88.37 & 88.33 & 71.00    & \textbf{71.00} &      71.00    & 43.986 &     18.887 \\
			\hline
		\end{tabular}
	\end{table}

	From the theoretical standpoint, we were not able to prove a full dominance relationship involving \SSS and the remaining algorithms; in particular, we presented a family of instances in the proof of Proposition~\ref{prop:greedy3approx} for which \SSS delivers better results than \GS. Our statistical analysis shows that the performance of \GS was not significantly different from that of \SSS  ($p$-values equal to 0.93). Moreover, \SSS  delivered better  results in 74 scenarios whereas \GS was the best in 76, thus suggesting a strong similarity of performance between both algorithms. As  \GS is a 3-approximation for \LTSP (Proposition~\ref{prop:greedy3approx}), the results suggest that the pathological scenarios that hindered the derivation of such performance guarantees for \SSS (see Proposition~\ref{prop:simplebad}) are not so likely to emerge  by chance.

	By construction, \NFGS dominates \FGS and \FGS dominates \GS, so we verified these relationships from the practical standpoint. First, we saw a clear dominance of \FGS over \GS. The $p$-value obtained via the Wilcoxon signed-rank test was $10^{-25}$, and \FGS delivered better results than \GS in all instances; moreover, in average, overall response times of \FGS were 19.6\% better than those of \GS. The differences between \FGS and \NFGS were also statistically significant ($p$-value of $10^{-12}$), but the improvements were not so remarkable. Out of the 150 individual instances, \NFGS improved the schedule over \FGS   in  82 cases; in average, though, improvements were inferior to $0.0005\%$, and in the cases were they did occur, the number grows to only~$0.0009\%$. This behavior can be  observed in Table~\ref{tab:offsynthetic}, as the relative performances of \FGS and \NFGS were similar for a precision of 4 digits.  Moreover, all strategies employing filtering (i.e., Corollary~\ref{prop:executeinphase2}) clearly dominated \SSS and \GS, with $p$-values of $10^{-25}$ and average improvements of almost $20\%$.

	The running times of \NFGS and \LogNFGS show that these two strategies may not be suitable for practical applications if the number of files is large.
	For instance, scenarios with 100,000 files need at least 15 seconds to be solved by these algorithms (more than 40 seconds in the case of \NFGS), which is prohibitive in certain real-world environments. The improvements of these two algorithms over~\LTFS are significant and could compensate for these delays, but~\FGS delivers schedules which are almost as good (frequently equal) within 0.01 seconds, which makes it a better choice as a general-purpose algorithm for~\LTSP. 
	Preliminary tests in the same computational environment suggest that \NFGS and \LogNFGS are still satisfactory  in scenarios where the time limit is set to 1 second and the number of files is bounded by 16,000 and 25,000, respectively. 
	

	\subsubsection*{Landsat instances} 
	

	The leftmost part of Table~\ref{tab:landsat} contains the results of our experiments involving the Landsat instances.   Overall, the results are very similar to those observed for the Synthetic instances; \NFGS delivered the best results, closely followed by \FGS and \LogNFGS.  The performance of \SSS and \GS could not be distinguished in a statistically significant way (we have obtained a $p$-value of 0.23); each delivered better solutions than the other for 50 instances, and relative improvements were never larger than 2\%.  \FGS was better than \SSS and \GS in all cases, with average improvements of almost 17\%. \LogNFGS improved only one instance over \FGS (by less than 0.0003\%), whereas \NFGS improved 44 (yielding solutions that were 0.05\% better). Finally, \NFGS was 0.05\%  better than \LogNFGS in 43 cases.
	From the computational perspective, all algorithms were sufficiently fast  for practical applications (all executions were finished in less than 0.002 seconds).
	


	\subsection{Online algorithms}
	
	In our experiments with~\OLTSP, we tested \LTFS, \ELTFS,  \ReplanSSS, \ReplanGS, \ReplanFGS, and \ReplanLogNFGS. 
	Note  that the  dominance relationships between the strategies that were  analytically demonstrated for~\LTSP do not necessarily extend to their online \Replan counterparts, which reinforces the relevance of empirical comparisons.

	\subsubsection*{Synthetic instances} 
	
	
	The results for the Synthetic instances are presented in Table~\ref{tab:onsynthetic}. In addition to the relative response times (compared against \ELTFS), we also report the average simulation times of these algorithms, which include the time spent with the computation of the scheduling sub-routines. Differently from the results for the offline algorithms, the Synthetic instances show a diversity in terms of performance across the difference algorithms. In particular, $|\timehorizon|$ played an important role in these tests, as we discuss next.

	\ReplanGS was considerably worse than all the other algorithms, including \ELTFS, the baseline for our comparisons. \ReplanGS is very susceptible to minor events and completely ignores the global status of the system, that is, it services files even when they clearly should not given the number of pending requests for other files in the tape.
	
	\ReplanSSS performed considerably better than the others in scenarios where~$k = 1$. An explanation for this phenomenon is the way releases are distributed over time in these instances. Namely, intervals between requests associated with the same file are large enough to make the execution of several requests on Phase 2 not so expensive in terms of response times; i.e., there are many requests being released by the time the tape is reaching its leftmost position. At the same time, all releases are concentrated within time horizon~$|\timehorizon| = m$, so virtually all requests are expected to be available by the time Phase 2 is executed for the first time, so a second execution of Phase 1 is typically not necessary in these scenarios.

	In an aggregate analysis, though (i.e., if we consider all scenarios together), both \ReplanFGS and \ReplanLogNFGS over-perform  \ReplanSSS in more scenarios (100 vs 50 in both cases) and with average improvements of almost 3\% (also in both cases). For 118 instances, \ReplanFGS had better results than \ReplanLogNFGS, but average improvements were small (less than 0.03\%); the same happened for scenarios where \ReplanLogNFGS was better, with even smaller improvements (less than 0.004\%).
	
	Finally, the running times show that the differences in performance between \ReplanLogNFGS and \ReplanFGS were not as significant as they were in the offline experiments.  The main reason for this is that the average size of the~\LTSP instances being solved is relative small (half of the size of the actual instance), so the amortized computational cost is not so large.  Moreover, in certain cases, \ReplanLogNFGS was actually faster than \ReplanFGS, thus showing that longer planning times may be compensated by better schedules in these scenarios. Nevertheless, the results with this benchmark suggest that \ReplanFGS should probably be the strategy of choice for general-purpose applications, as it delivered the best results and had satisfactory running times.

	\begin{table}
		\footnotesize
		\centering
		\caption{Performance of algorithms for \OLTSP on synthetic instances.}
		\label{tab:onsynthetic}
		\begin{tabular}{|c|cccc|cccc|}
			\hline 
			\multicolumn{1}{c}{Instance}		& \multicolumn{4}{c}{Relative objetive values}  & \multicolumn{4}{c}{Running times (sec.)} \\
			& \SSS          &     GS & FGS          & Log-NFGS     &     \SSS &      GS &      FGS &   Log-NFGS \\
			\hline
			20000-1    & \textbf{70.65} & 113.07 & 74.91        & 74.97        &   1.908 &   0.999 &    6.873 &      7.916 \\
			20000-3    & 89.55        & 109.00    & \textbf{88.00} & 88.02        &   4.943 &   2.483 &   17.915 &     16.869 \\
			20000-5    & 89.80        & 119.53 & \textbf{86.18} & 86.19        &   7.869 &   4.293 &   27.951 &     25.939 \\
			40000-1    & \textbf{73.48} & 117.56 & 77.85        & 77.90        &   7.717 &   3.995 &   34.002 &     41.000     \\
			40000-3    & 81.50        &  99.28 & \textbf{80.07} & 80.08        &  20.619 &  10.252 &  100.404 &     90.064 \\
			40000-5    & 92.55        & 123.29 & \textbf{88.87} & 88.87        &  32.680  &  17.203 &  142.053 &    130.133 \\
			60000-1    & \textbf{73.79} & 117.97 & 78.19        & 78.20        &  27.364 &  11.376 &   79.377 &     98.233 \\
			60000-3    & 89.89        & 109.55 & \textbf{88.33} & 88.34        &  61.193 &  28.469 &  222.970  &    218.202 \\
			60000-5    & 93.94        & 125.00    & 90.17        & \textbf{90.17} & 107.411 &  57.615 &  350.643 &    327.371 \\
			80000-1    & \textbf{69.85} & 111.74 & 74.03        & 74.05        &  74.505 &  29.944 &  154.08  &    200.923 \\
			80000-3    & 81.39        &  99.36 & \textbf{80.01} & 80.01        & 155.482 &  68.600   &  445.981 &    432.341 \\
			80000-5    & 88.93        & 118.37 & 85.34        & \textbf{85.34} & 346.643 & 169.698 &  696.005 &    594.049 \\
			100000-1   & \textbf{69.71} & 111.59 & 73.90        & 73.91        &  92.131 &  44.766 &  244.724 &    313.554 \\
			100000-3   & 85.03        & 103.76 & \textbf{83.57} & 83.57        & 327.077 & 152.815 &  720.067 &    732.237 \\
			100000-5   & 90.17        & 120.02 & 86.54        & \textbf{86.54} & 607.318 & 306.56  & 1195.86  &   1155.000     \\
			\hline
		\end{tabular}
	\end{table}

	\subsubsection*{Landsat instances}

	Our results for the online Landsat instances are presented in the rightmost part of Table~\ref{tab:landsat}.  Direct visual inspection allows us to see that the relative performances of the algorithms were similar to the ones observed for the offline experiments.

	The performances of \ELTFS and \SSS were slightly improved in comparison with the offline experiments, whereas \GS became worse. \ELTFS remained worse than all the other algorithms, but \SSS got significantly better than \GS, delivering superior results in 81 cases (average improvements of 2.5\%). \LogNFGS was better  than \FGS in more instances (48 vs. 38), but the differences in performance between both algorithms was not statistically significant ($p$-value of 0.15).

	Finally, most release times for Landsat instances are concentrated in the very beginning of the planning horizon (i.e., release times are much smaller than~$m$). Consequently, these instances were actually different from the 
	$|\files|-1$ configurations composing the Synthetic dataset, which explains why the superiority of  \SSS over the other algorithms that have been observed for this benchmark was not replicated at a similar extent to the Landsat instances.



	\begin{table}
		\footnotesize
		\centering
		\caption{Results for Landsat instances.}
		\label{tab:landsat}
		\begin{tabular}{|c|cccccc|ccccc|}
			\hline 
			\multicolumn{1}{c}{Instance}		& \multicolumn{6}{c}{Offline}  & \multicolumn{5}{c}{Online} \\
			&    \SSS &     GS &    FGS & NFGS &   Log-NFGS &   LTFS+ &    \SSS &     GS &    FGS & Log-NFGS      &   LTFS+ \\
			\hline 
			02-15-22   & 10.40 & 10.40 &  8.76 & \textbf{8.76} &       8.76 &   13.42 & 10.21 & 10.47 & 8.70        & \textbf{8.70} &   13.23\\
			09-15-32   & 10.26 & 10.29 &  8.61 & \textbf{8.61} &       8.61 &   11.90 & 10.12 & 10.38 & \textbf{8.58} & 8.59        &   11.76 \\
			17-21-32   & 10.44 & 10.37 &  8.68 & \textbf{8.67} &       8.68 &   12.29 & 10.31 & 10.43 & \textbf{8.64} & 8.64        &   12.16 \\
			02-21-30   & 10.18 & 10.14 &  8.53 & \textbf{8.52} &       8.53 &   11.64 & 10.05 & 10.18 & 8.47        & \textbf{8.47} &   11.52\\
			07-18-32   & 10.35 & 10.29 &  8.57 & \textbf{8.57} &       8.57 &   11.70 & 10.22 & 10.33 & 8.52        & \textbf{8.51} &   11.56\\
			17-26-38   & 10.15 & 10.09 &  8.47 & \textbf{8.47} &       8.47 &   11.76 & 10.02 & 10.18 & 8.45        & \textbf{8.43} &   11.62\\
			07-26-32   & 10.35 & 10.24 &  8.56 & \textbf{8.55} &       8.56 &   11.73 & 10.19 & 10.32 & \textbf{8.51} & 8.52        &   11.57\\
			09-26-32   & 10.23 & 10.30 &  8.55 & \textbf{8.55} &       8.55 &   12.77 & 10.09 & 10.35 & 8.50        & \textbf{8.50} &   12.63\\
			17-18-30   & 10.23 & 10.18 &  8.52 & \textbf{8.52} &       8.52 &   12.21 & 10.12 & 10.30 & 8.52        & \textbf{8.49} &   12.10\\
			17-26-38   & 10.25 & 10.19 &  8.46 & \textbf{8.46} &       8.46 &   13.12 & 10.12 & 10.34 & 8.46        & \textbf{8.43} &   12.99\\
			\hline
		\end{tabular}
	\end{table}


	\section{Conclusion}\label{sec:conclusion}
	
	We investigated in this article the Linear Tape Scheduling Problem, which aims to identify strategies for the execution of read operations in single-tracked magnetic tapes such that the overall response times are minimized. \LTSP is similar to  classic combinatorial optimization problems, such as the Traveling Repairmen Problem and the Dial-a-Ride Problem in a line, but peculiarities on the behavior of magnetic tapes make \LTSP an interesting problem on its own.
	
	In practical settings, the Linear Tape Scheduling Problem is  online, so in addition to the basic version of the problem, we also investigated~\LTSPR, the extension with release times, and~\OLTSP, the online extension of~\LTSPR. Several theoretical results have been presented, including structural properties, complexity and competitive analysis, and algorithms. In particular, we showed that~\LTSPR is NP-hard, \OLTSP does not admit $c$-competitive algorithms for any given constant~$c$, and that~\LTSP can be solved efficiently if all files have the same size and the number of requests per file is at most one; we conjecture that the general case of~\LTSP   is NP-hard.

	Practical applications of the Linear Tape Scheduling Problem and its variants impose very short time limits for scheduling decisions, and this condition strongly influenced the design of the algorithms presented in this work. For~\LTSP, we presented 3-approximation schemes  which are sufficiently fast for real-world scenarios. 
	We also proposed an online algorithm for~\OLTSP of satisfactory performance that significantly over-performed the strategy being currently used in the industry, as we have shown in our computational experiments on synthetic and real-world datasets.

	Finally, there are several interesting research directions motivated by the Linear Tape Scheduling Problem that could be investigated in future work. From the theoretical standpoint, some  special cases of~\LTSP seem to be as hard as the original problem itself, but these questions remain open; some examples are scenarios where all files have the same size and the number of requests per file is arbitrary and where files may have different sizes and at most one request each. A natural extension of~\LTSP with clear practical relevance involves multi-track serpentine tapes, which may consist of two or more tracks with potentially  different reading directions.

	\section*{Acknowledgment} 
	The authors would like to thank Andre Cire for interesting discussions, comments,  and suggestions.

	\bibliographystyle{plainnat}
	\bibliography{references}

\end{document}